\setlist[enumerate]{leftmargin=.5in}
\setlist[itemize]{leftmargin=.5in}
\crefname{hypothesis}{Hypothesis}{Hypotheses}
\title{Hypergraph patterns and collaboration structure%\thanks{Submitted to the editors DATE.
%\funding{}}
}
\author{Jonas L. Juul \thanks{Center for Applied Mathematics, Cornell University, Ithaca, New York 14853, USA (\email{jjuul@cornell.edu})}
\and Austin R. Benson\thanks{Department of Computer Science, Cornell University, Ithaca, New York 14853, USA.}
\and Jon Kleinberg\footnotemark[2]}
\newcommand*{\addFileDependency}[1]{% argument=file name and extension
  \typeout{(#1)}% latexmk will find this if $recorder=0 (however, in that case, it will ignore #1 if it is a .aux or .pdf file etc and it exists! if it doesn't exist, it will appear in the list of dependents regardless)
  \@addtofilelist{#1}% if you want it to appear in \listfiles, not really necessary and latexmk doesn't use this
  \IfFileExists{#1}{}{\typeout{No file #1.}}% latexmk will find this message if #1 doesn't exist (yet)
}
\definecolor{Gray}{gray}{0.1}
\begin{document}

%\nolinenumbers

\newcommand{\ER}{Erd\H{o}s--R\' enyi{ }}
\newcommand{\mpat}{X}

\maketitle

\newcommand{\jljcom}[1]{\textcolor{orange}{[#1]}}
% REQUIRED
\begin{abstract}
Humans collaborate in different contexts such as in creative or scientific projects, in workplaces and in sports. 
Depending on the project and external circumstances, a newly formed collaboration may include people that have collaborated before in the past, and people with no collaboration history. 
Such existing relationships between team members have been reported to influence the performance of teams. 
However, it is not clear how existing relationships between team members should be quantified, and whether some relationships are more likely to occur in new collaborations than others. 
Here we introduce a new family of structural patterns, $m$-patterns, which formalize relationships between collaborators and we study the prevalence of such structures in data and a simple random-hypergraph null model. 
We analyze the frequency with which different collaboration structures appear in our null model and show how such frequencies depend on size and hyperedge density in the hypergraphs. 
Comparing the null model to data of human and non-human collaborations, we find that some collaboration structures are vastly under- and overrepresented in empirical datasets.
Finally, we find that structures of scientific collaborations on COVID-19 papers in some cases are statistically significantly different from those of non-COVID-19 papers.  Examining citation counts for $4$ different scientific fields, we also find indications that repeat collaborations are more successful for 2-author scientific publications and less successful for 3-author scientific publications as compared to other collaboration structures.
\end{abstract}

% REQUIRED
\begin{keywords}
  Hypergraphs, team performance, collaboration structure, COVID-19, random graphs
\end{keywords}

% REQUIRED
\begin{AMS}
   05C65, 05C80, 91D30, 91F99
\end{AMS}

\section{Introduction}
When a new team forms, who are likely to be members of this team? 
Who are unlikely to join forces? 
Are some team constellations better suited for solving some tasks than others? 
How do external circumstances such as tight deadlines or empty schedules affect how and which teams form?

The questions above arise in all of the different settings where team formation and performance are important. 
Indeed, in online collaboration over the Web \cite{nielsen2011reinventing}, creative undertakings~\cite{guimera_team_2005}, technology and science~\cite{wu_large_2019} and school~\cite{troster_structuring_2014}, group size and the structure of social ties in the group have been reported to be of importance for the performance of teams. 
Although this diversity of settings already make the questions rich, they become even richer when one considers the plethora of external circumstances that can influence team formation in each of the settings. 
Take the COVID-19 pandemic; when researchers needed to quickly mobilize, 
analyze the spread of the disease, and its impact on society, did they work primarily in tightly-knit groups with a history of collaboration? 
Or did the interdisciplinary and high-stakes nature of the research questions make scholars work in diverse and untried teams? 

Both of the hypotheses above are reasonable and demand serious consideration. 
But how does one formalize the notion of a tightly-knit or a novel team structure? 
The essential thing to quantify in these concepts is the relationship between the members of the newly formed team. 
What were these people doing before they joined forces? 
Did subsets of the team work together before, and did others not?

Examples from popular culture richly illustrate the relevance of examining the existing relationships between team members in successful undertakings. 
For example, the American rock band Audioslave rose to popularity after being formed by Soundgarten singer Chris Cornell and $3$ former members of Rage Against the Machine: Tom Morello, Tim Commerford, and Brad Wilk.
In studio sessions, it is also common for groups of musicians to perform together repeatedly; the horn section of the legendary R{\&}B-band Tower of Power have appeared together on a large number of other artists' recordings.
In technology, the company Bumble was founded by three Tinder departees (Whitney Wolfe Herd, Chris Gulczynski and Sarah Mick) and Badoo-CEO and acquaintance of Wolfe Herd's, Andrey Andreev.  
In movies, Samuel L. Jackson stars in several Quentin Tarantino movies, and Charlotte Gainsbourg plays leading roles in $3$  of director Lars Von Trier's recent works. 

To formally study the formation of teams and existing relationships between team members, it is useful to use the language of hypergraphs.
In the hypergraph framework, people are represented by nodes, and connections \--- called hyperedges \--- can connect groups of nodes of any size that have worked together in the past. 
The focus on hypergraphs as representations of networked systems, has gained considerable traction in recent years~\cite{benson_higher-order_2016, benson_simplicial_2018,battiston_networks_2020,benson_higher-order_2021}, following two decades of intense study of graphs with only dyadic interactions~\cite{newman_networks_2018,jackson_social_2008,easley_networks_2010}.

Many of the questions being pursued in this recent work on hypergraphs are generalizations of concepts from the well-known world of dyadic interactions. 
These include questions regarding hypergraph modularity~\cite{kaminski_clustering_2019,kumar_hypergraph_2020,chodrow_generative_2021,yin_local_2017,yin_higher-order_2018,benson_tensor_2015, chien_community_2018,ahn_hypergraph_2018}, higher-order assortativity~\cite{veldt_higher-order_2021,landry2022hypergraph}, simplicial closure~\cite{benson_simplicial_2018}, hypergraph motifs and other structural patterns~\cite{lee_hypergraph_2020,kook_evolution_2020}, construction of synthetic hypergraphs with certain characteristics~\cite{courtney2016generalized,courtney2017weighted,bianconi2017emergent,young_construction_2017,kim_stochastic_2018,chodrow_configuration_2020, dyer_sampling_2020}, and how to infer higher-order network structure from data~\cite{battiston2021physics,young_hypergraph_2021}. 
The introduction of higher-order connections also makes it possible to ask completely new questions about the structure of the networked system. 
For example, a recent paper examined how hyperedges overlap in empirical hypergraphs~\cite{lee_how_2021}. 
Such a question would be trivial in the world of dyadic interactions, as dyadic interactions can only overlap in their two endpoints. In hypergraphs, however, the question is meaningful since different hyperedges could contain identical subsets of the network nodes. 

In this paper, we introduce a new family of structural patterns in hypergraphs, designed to capture the prior associations of the nodes making up a given hyperedge. 
We call these $m$-patterns, and they represent the existing relationship between groups of $m$ nodes. 
These relationships are exactly the above-mentioned quantity of interest when studying the formation of teams of size $m$. 

Formally, $m$-patterns are subhypergraphs of size $m$. 
The subhypergraph consists of the $m$ nodes under consideration, all hyperedges connecting subsets of the $m$-nodes, and fractions of hyperedges that connect subsets of the $m$-nodes to hypergraph nodes other than the $m$ under consideration. 
The inclusion of fractions of hyperedges causes $m$-patterns to quantify structure between the level of nodes and hyperedges. 
This makes $m$-patterns different from motifs and a new kind of microstructure that exists in hypergraphs, but not in graphs with dyadic interactions only. 

After having introduced $m$-patterns, we argue that the prevalence of different $m$-patterns are expected to depend on hypergraph characteristics such as hyperedge density. 
To understand this dependency, we examine how prevalence of $m$-patterns change with parameters in a $G(N,p)$-like model. 
We proceed to compare these null-model results to $m$-pattern prevalence in a wide range of datasets on human collaborations, drug networks, email networks and online tagging data.
We then examine whether collaboration structure can be influenced by external circumstances such as tight schedules. We do this by comparing collaboration structure in scientific preprints and early preprints of COVID-19 papers.
Finally, we investigate whether future citations of academic publications correlate with collaboration team structure; specifically, we compare citation counts for repeat collaborations and first-time collaborations without first-time authors.

\section{$m$-patterns in random hypergraphs}
Let us now proceed to studying past relationships between nodes in hyperedge formation. 
Our first step will be to study a simple model of random hypergraphs. 
Later, we will move from such synethetic hypergraphs and analyze node relationships in empirical hypergraphs. 
Before we can make any of these analyses, however, we must introduce the mathematical structures that we will use to understand node relationships in hyperedge formation.
\subsection{A structural pattern to summarize past relationships}
\label{sec:definition}
To define the topic of this paper, $m$-patterns, we will need some other concepts. The first of these is the notion of an induced subhypergraph~\cite{bretto2013hypergraph}.%{Bretto's definition on page 2 does include loops}.

\begin{definition}
\textbf{Induced subhypergraph}. 
The induced subhypergraph of a hypergraph $\mathcal{H}=(\mathcal{V},\mathcal{E})$ on $m$ nodes, $\mathcal{V}_I$, is a hypergraph $\mathcal{H}_I=(\mathcal{V}_I,\mathcal{E}_I)$ . For each $e\in\mathcal{E}$ that contains at least one node from $\mathcal{V}_I$, $\mathcal{E}_I$ contains a hyperedge $e'$  linking all nodes that are both in $e$ and $\mathcal{V}_I$. 
\label{def:induced_hypergraph}
\end{definition}
% Bretto Link: https://de.ebooks.eu/img/books/extract/3319000802_lp.pdf

It is clear that an induced subhypergraph completely summarizes all existing relationships between its constituting nodes. The final sentence of Definition~\ref{def:induced_hypergraph} means that $\mathcal{H}_I$ contains fractions of the hyperedges of $\mathcal{H}$. This makes the induced subhypergraph an interesting object for hypergraphs. For graphs, fractions of edges are simply vertices, and so the graph equivalent of this definition would just be a subgraph on $m$ chosen nodes. 
If we do not need the entire relationship history between nodes, but are content with summarizing the largest subsets of nodes that have collaborated in the past, the following definition is useful.

\begin{definition}
\textbf{Maximal induced subhypergraph}. 
The maximal induced subhypergraph $\mathcal{H}_I=(\mathcal{V}_I,\mathcal{E}_I)$ of a hypergraph $\mathcal{H}=(\mathcal{V},\mathcal{E})$ on $m$ nodes, $\mathcal{V}_I$, is the corresponding induced subhypergraph made simple by removing all hyperedges from $\mathcal{E}_I$ that are entirely contained in other hyperedges in $\mathcal{E}_I$.
\end{definition}

The key difference between an induced subhypergraph and a maximal induced subhypergraph is that the latter is simple. A simple hypergraph is defined as follows~\cite{berge1984hypergraphs}.
\begin{definition}
\textbf{Simple hypergraph}. 
A hypergraph $\mathcal{H}=(\mathcal{V},\mathcal{E})$ is simple if none of its hyperedges are entirely contained in another, $\nexists s_i,s_j\in\mathcal{E}:s_i\subseteq s_j$.
\label{def:simple_hypergraph}
\end{definition}
Notice that simple hypergraphs are different from simple graphs in that simple hypergraphs can contain self-looping hyperedges. We note that the hypergraphs we consider in this paper generally are not simple. Simple hypergraphs play a different role in this story. 
Because simple hypergraphs cannot have parallel edges there exists only a finite number of different such hypergraphs of size $m$. This is a nice feature if we are interested in quantifying typical relationship structures among people that choose to form teams. This is exactly what we are interested in, so we refer to these finitely many relationship structures on $m$ nodes as {\em $m$-patterns}.

\begin{definition}
\textbf{$m$-pattern}. 
A simple hypergraph with $m$ vertices is an $m$-pattern.
\label{def:m-patterns}
\end{definition}

With the concept of an $m$-pattern in hand, we are now ready to look for instances of $m$-patterns in larger hypergraphs. 

\begin{definition}
\textbf{Instance of an $m$-pattern}. 
An instance of an $m$-pattern $\mpat$ in the hypergraph $\mathcal{H}=(\mathcal{V},\mathcal{E})$ is a maximal induced subhypergraph $\mpat’$ on $m$ nodes  which is isomorphic to $\mpat$.
\label{def:instance_of_m-pattern}
\end{definition}

Figure~\ref{fig:m-pattern_illustration} illustrates what such $m$-patterns from maximally induced subhypergraphs might look like. The figure shows three collaborations. Some people in these collaborations have worked together previously \--- perhaps in larger groups. Such larger collaborations become $k$-node hyperedges in the $m$-patterns that the collaboration structure form.

\begin{figure}
    \centering
    \includegraphics[width=.5\linewidth]{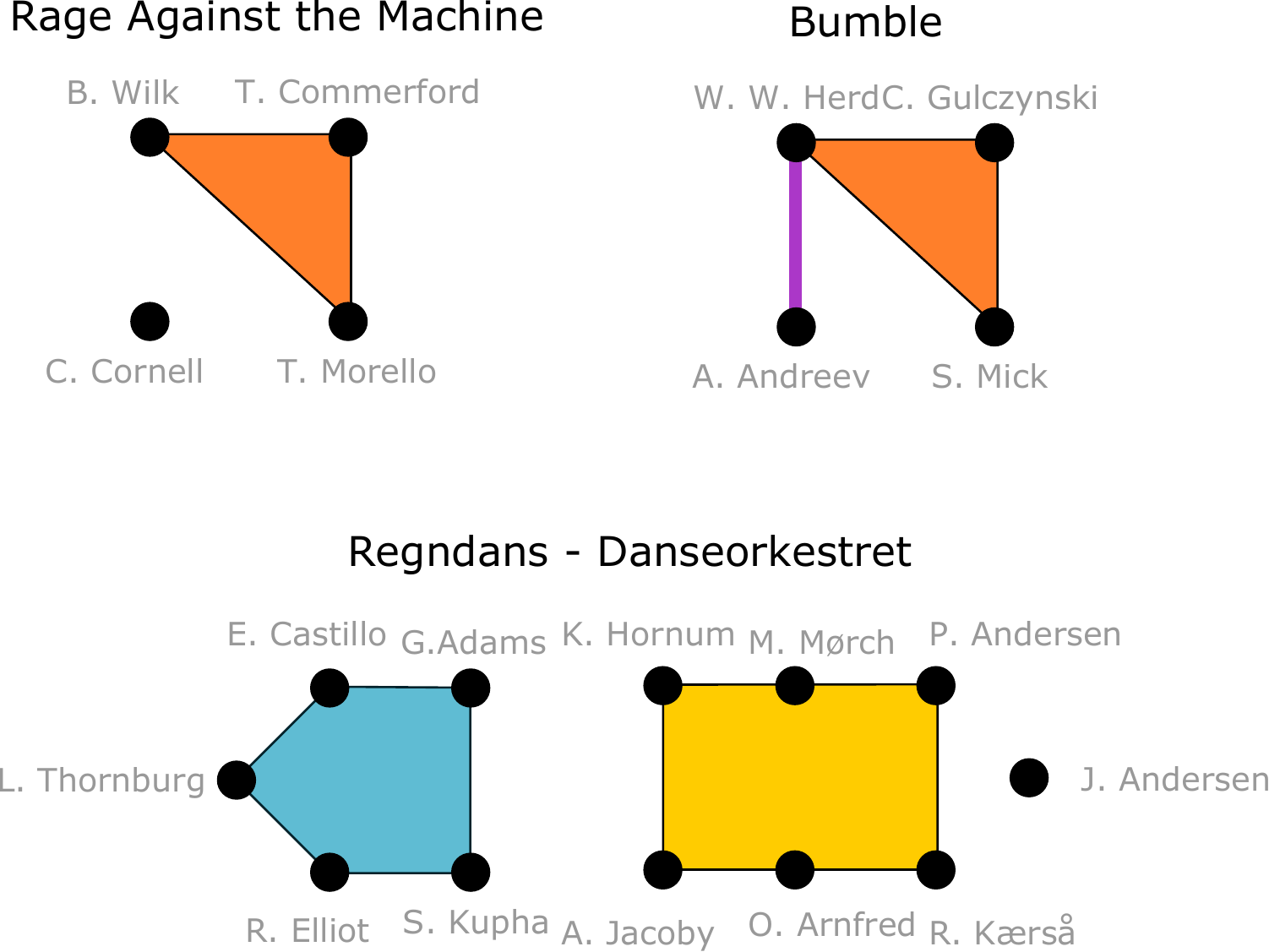}
    \caption{Illustration of the relationship between individual members of Audioslave, Bumble founders and musicians on recording of Regndans by Danseorkestret.
    \label{fig:m-pattern_illustration}
    \vspace{-15pt}
    }
\end{figure}

With the definition of $m$-patterns, and their instances in hypergraphs, we now have a formal way of talking about  
existing relationships between hypergraph vertices.
In particular, 
when a new team of $m$ individuals appears,
we consider the team members' past history of interactions to be the $m$-pattern consisting of all maximal subsets that have worked together before.
\begin{definition}
\textbf{Instance of a labelled $m$-pattern}. 
An instance of a labelled $m$-pattern $\mpat$ with assigned vertex labels $1,2\ldots m$ in the hypergraph $\mathcal{H}=(\mathcal{V},\mathcal{E})$ is a maximal induced subhypergraph $\mpat’$ on $m$ nodes with assigned vertex labels $1,2,\ldots, m$ which is isomorphic to $\mpat$ and where corresponding vertices have the same assigned labels as in  $\mpat$.
\label{def:instance_of_labelled_m-pattern}
\end{definition}

In Appendix Section~\ref{app:sec:hypergraph_definitions_illustration}, we illustrate connections between some of the concepts introduced in this section.

We are now ready to examine what $m$-patterns among nodes precede hyperedge formation in hypergraphs. 
In the following subsection, we will do so in a class of synthetic random hypergraphs. 

\subsection{$G^{(m)}(N,p)$ model of random hypergraphs}
\label{sec:G^m(N,p)}
From Definition~\ref{def:m-patterns} and~\ref{def:instance_of_m-pattern}, it is clear that the structure of the underlying hypergraph $\mathcal{H}$ greatly influences what $m$-patterns that can exist among sets of $m$ nodes, and what multiplicity these $m$-patterns might have in the hypergraph. 
If the hypergraph is very sparse, most sets of $m$ nodes have never collaborated before. 
For sparse hypergraphs, this presents us with the following question. 
When a newly formed team consists of $m$ nodes with no past collaborations, is this because people tend to team up with strangers, or because the underlying hypergraph is sparse? 
If we want to understand whether some existing relationship structures are more likely to give rise to future team formations, we must know what to expect by chance alone. 
Studying $m$-patterns in a null-model of hypergraphs can help us gain intuition about what $m$-patterns we should expect to dominate at different hyperedge densities. 

We choose to study $m$-patterns in a hypergraph generalization of the widely-studied random-graph family known as \ER graphs \--- or $G(N,p)$.  $G(N,p)$ is known to create unrealistically simple graph structures. Nonetheless, the dyadic $G(n,p)$ model has been a major driver in the development of the study of networks: it is the simplest random-graph model, analytically tractable, and its phenomena are correspondingly clear to articulate. We study a $G(N,p)$-type model for the same reasons.

In the classic $G(N,p)$ model, an $N$-vertex random graph is created by inserting each possible edge with probability $p$~\cite{newman_networks_2018}. 
Various hypergraph generalizations of the $G(N,p)$ model have been studied in the past~\cite{linial2006homological, meshulam2009homological, costa2016random, kahle2009topology, kahle2014topology}. 
We choose to study a version where a hypergraph with $N$ nodes and $m$-vertex hyperedges is created by inserting each possible hyperedge connecting $m$ nodes with probability $p$. 
Since the parameters $N$, $p$ and $m$ define this hypergraph family, $G^{(m)}(N,p)$ is a natural name to summarize the family. 
The dyadic \ER graphs, normally known as $G(N,p)$, would be $G^{(2)}(N,p)$ in this notation. 

With the $G^{(m)}(N,p)$ model in hand, we set out to examine how often a new hyperedge would join $m$ nodes with $m$-pattern $\mpat$ by chance, given that the hyperedge is forming in a hypergraph created using the $G^{(m)}(N,p)$ model with parameters $N$, $p$ and $m$. 
To quantify this, we create a large number of $G^{(m)}(N,p)$ hypergraphs and count the average fraction of sets of $m$ nodes that form each pattern $\mpat$ across these many random hypergraphs for choices of hypergraph size $N$, set size $m$ and as a function of hyperedge probability $p$. 
In Figure~\ref{fig:random_m-patterns}, we show results obtained for two such simulations. 
In Figure~\ref{fig:random_m-patterns}A, the constructed hypergraphs have size $N=50$ and contain hyperedges joining $m=3$ nodes. 
In Figure~\ref{fig:random_m-patterns}B, the hypergraphs have size $N=100$ and hyperedges join $m=4$ nodes. 
The first thing to notice about these figures is that, when increasing $p$ from $0$, all but two $m$-patterns increase in prevalence, experience peak prevalence, and finally become less common again. 
The two patterns that do not take such journeys are: 1) the pattern in which noone collaborated with anyone before; and 2) the repeat collaboration. 
The occurrences of the no-past-collaboration pattern monotonically declines with $p$, whereas the repeat-collaboration pattern increases monotonically with $p$. These ``exceptions'' are easily understood: As $p$ increases, more nodes become part of $m$-node hyperedges. A higher $p$ means that fewer nodes avoid collaborations altogether, whereas $m$-node collaborations (what we also call repeat collaborations) increase linearly with $p$.  

Having noticed regularities in the general shape of prevalence curves in Fig.~\ref{fig:random_m-patterns}A, another interesting observation is that not all patterns get to be the most common for any $p$ in Fig.~\ref{fig:random_m-patterns}B. 
For example, the pattern consisting of a single $3$-node hyperedge and a solitary node (dashed orange line) never outgrows all other patterns. 
This observation is interesting enough that we introduce a term for a pattern which gets to be the most common at a given value of $p$.
\begin{definition}
\textbf{Extreme pattern} An $m$-pattern, $\mpat$, is extreme if, for a particular value of $N$, the $m$-pattern is the most prevalent of all $m$-patterns for some $p$.
\label{def:extreme}
\end{definition}

\begin{definition}
\textbf{Extreme in the limit} An $m$-pattern, $\mpat$, is extreme in the limit if there exists an $N_0$ such that for all $N>N_0$ there exists a $p$ where the pattern is the most prevalent of all $m$-patterns in the hypergraph.
\end{definition}

A third interesting observation from Figure~\ref{fig:random_m-patterns} is the order in which extreme patterns are the most common in the hypergraphs when increasing $p$. 
As $p$ increases, the pattern with no previous collaborations is the most common at first. 
Then follow patterns containing disjoint nodes that all have previous collaborations, but none with each other. 
Then patterns that include dyadic collaborations, etc. 
These observations beg for explanations. 
Can we understand the shape of the prevalence curves and estimate them analytically?
Can we understand which $m$-patterns are extreme and for which hyperedge densities, $p$, these patterns are the most common? 

\begin{figure}
    \centering
\includegraphics[width=\linewidth]{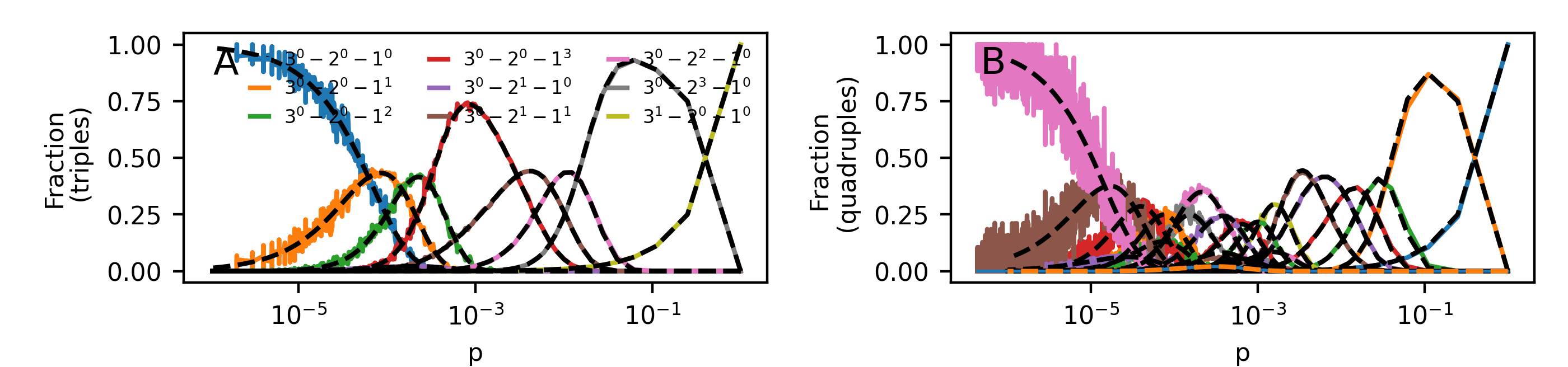}
    \caption{Frequency of $m$-patterns in the $G^{(m)}(N,p)$ model as a function of p for $m=3$, $N=50$ (\textbf{A}) and $m=4$, $N=100$ (\textbf{B}). Each datapoint is the average frequency of an $m$-pattern in $100$ independent simulations of the model. The pattern $3^i-2^j-1^k$ contains $i$ 3-node, $j$ $2$-node and $k$ $1$-node hyperedges. Analytical estimates of prevalence in Eq.~\eqref{eq:analytical_expression} is plotted with dashed lines. In (B), multiple curves are plotted with same colors;  See Appendix Section~\ref{app:sec:labelled2B} for a labelled version of B. 
    \label{fig:random_m-patterns}}
\end{figure}

The answers to both of the above questions are yes. 
With the following theorem, we identify a sizeable number of patterns that cannot be extreme in the limit.
\begin{theorem}
If the pattern $\mpat$ contains $H$-node hyperedges and misses $l+1$-node hyperedges, and $|H-l|\ge 2$, $\mpat$ is not extreme in the limit.
\label{thm:extreme-patterns}
\end{theorem}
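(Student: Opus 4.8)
The plan is to pin $\mpat$ between two bounds: an upper bound on its prevalence that goes to $0$ uniformly in $p$, and a lower bound on the prevalence of \emph{whatever} pattern is most common that stays bounded away from $0$. First I would record two elementary estimates for the expected fraction of $m$-sets forming $\mpat$. Fix an $m$-set $\mathcal V_I$. Since every hyperedge of $G^{(m)}(N,p)$ has exactly $m$ vertices, a fixed $H$-subset $S\subseteq\mathcal V_I$ appears as an induced hyperedge with probability $1-(1-p)^{a}$, where $a=\binom{N-m}{m-H}$ counts the $m$-edges meeting $\mathcal V_I$ in exactly $S$; and a fixed $(l+1)$-subset $S'$ is contained in no hyperedge with probability $(1-p)^{b}$, where $b=\binom{N-l-1}{m-l-1}$ counts the $m$-edges containing $S'$. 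Because an instance of $\mpat$ must carry a (maximal) $H$-edge and must leave some $(l+1)$-set uncovered, a union bound over the $\le 2^m$ choices of $S$ and $S'$ yields, with $y:=1-p$,
\[
\Pr[\mathcal V_I\text{ forms }\mpat]\ \le\ 2^m\min\!\bigl(1-y^{a},\,y^{b}\bigr).
\]

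Next I would analyse $g(y)=\min(1-y^{a},y^{b})$ on $(0,1)$. It vanishes at both endpoints and is maximised where $1-y^{a}=y^{b}=:c$, which forces $c=(1-c)^{\,b/a}$. This is exactly where the gap hypothesis enters: when $H\ge l+2$ one has $b/a=\Theta\!\bigl(N^{\,H-l-1}\bigr)\to\infty$, and from $\log c=(b/a)\log(1-c)$ any $c$ bounded away from $0$ would send the right-hand side to $-\infty$, a contradiction; hence the maximiser obeys $c=c(N)\to 0$. Therefore $\Pr[\mathcal V_I\text{ forms }\mpat]\le 2^m c(N)\to 0$, and crucially this bound is uniform over all $p\in(0,1)$.

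Finally I would bound the leading pattern from below. Each $m$-set forms exactly one $m$-pattern, so the prevalences of the finitely many $m$-patterns sum to $1$; writing $C_m$ for their number, at every $p$ the most prevalent pattern has prevalence at least $1/C_m$. Picking $N_0$ so that $2^m c(N)<1/C_m$ for all $N>N_0$, we obtain $\mathrm{prev}(\mpat)<\max_{Y}\mathrm{prev}(Y)$ for every $p$, so $\mpat$ is never the most prevalent pattern once $N>N_0$, i.e.\ $\mpat$ is not extreme in the limit.

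The main obstacle is the uniformity in $p$: neither estimate alone suffices, since $1-y^{a}$ is small only for small $p$ and $y^{b}$ only for large $p$. The real content is that these two ``forbidden'' regimes must overlap, quantified by the crossover identity $c=(1-c)^{\,b/a}$, and this overlap is precisely what the separation of scales $b/a\to\infty$ delivers. That separation is governed by the ratio of the two edge-counts $a$ and $b$, and becomes unbounded exactly under the stated gap between the present edge size $H$ and the missing size $l+1$; carrying the bookkeeping of $a$ and $b$ correctly, so that the exponent of $N$ in $b/a$ is $H-l-1$, is the step I expect to require the most care.
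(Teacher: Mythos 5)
Your proof is correct, and it takes a genuinely different route from the paper's. The paper proves this theorem by combining its exact prevalence formula (Lemma~\ref{lem:mpat_analytical}), $P(\mpat)=\gamma_\mpat\prod_i p_i^{x_i}(1-p_i)^{y_i}$, with the threshold-separation result (Lemma~\ref{lem:zeroone}): since $P(\mpat)$ contains both factors $p_H^{x_H}$ and $(1-p_{l+1})^{y_{l+1}}$, and since for large $N$ no value of $p$ keeps $p_H$ away from $0$ while keeping $p_{l+1}$ away from $1$ (when $l+1\le H-1$), the prevalence tends to $0$. You bypass both lemmas: you never write down the exact prevalence, only two union bounds (some $H$-set must be induced, some $(l+1)$-set must be uncovered), and you extract the conclusion from the crossing point $c=(1-c)^{b/a}$ of the two bounds, with $b/a=\Theta(N^{H-l-1})\to\infty$. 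The underlying mechanism is the same separation of scales that powers Lemma~\ref{lem:zeroone}, but your execution is more elementary and is tighter in two places where the paper is informal: (i) your bound $2^m c(N)\to 0$ is uniform in $p$, whereas the paper argues only at densities where $p_H$ is bounded away from $0$ and $1$, leaving the regimes $p_H\to 0$ and $p_H\to 1$ (and $N$-dependent choices of $p$) implicit; and (ii) you make explicit the pigeonhole step --- the finitely many prevalences sum to $1$, so the leading pattern always has prevalence at least $1/C_m$, where $C_m$ is the number of $m$-patterns --- without which ``$P(\mpat)\to 0$'' does not formally yield ``not extreme in the limit.'' One caveat applies equally to your argument and the paper's: both cover only the branch $H\ge l+2$ of the stated hypothesis $|H-l|\ge 2$. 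That is the intended reading: under the paper's notion of a ``missing'' hyperedge, the opposite branch $l\ge H+2$ would contradict Theorem~\ref{thm:pure_patterns}, since a pure pattern of $k$-node hyperedges misses all larger hyperedges yet is extreme in the limit. So your specialization is not a gap, though it is worth flagging explicitly that the absolute value in the statement should be read as $H-l\ge 2$.
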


Theorem~\ref{thm:extreme-patterns} tells us why the $m$-pattern with a 3-node hyperedge and a solitary node is not extreme in Figure~\ref{fig:random_m-patterns}B (or rather, why it would not be in the limit $N\to \infty$). 
The reason is that the pattern contains a $3$-node hyperedge, and misses $2$-node hyperedges that could have existed. Since $|3-(2-1)|=2$, 
 Theorem~\ref{thm:extreme-patterns} tells us that such a pattern cannot be extreme in the limit.

In order to prove Theorem~\ref{thm:extreme-patterns}, we will need $2$ Lemmas. 
The first Lemma conveniently answers the second question we asked above: Can we understand the shape of the prevalence curves of $m$-patterns?
We will answer this question by writing down a formula for the expected frequency of the $m$-pattern $\mpat$ among the instances of $m$-patterns in $G^{(m)}(N,p)$ hypergraphs.
We can do this if we think about the prevalence of an $m$-pattern in the following way. 
The fraction of sets of $m$ nodes that form an $m$-pattern $\mpat$ in a $G^{(m)}(N,p)$ hypergraph is equal to the probability that the pattern is formed by the $m$ nodes when each size-$m$ hyperedge is inserted with probability $p$. 
Calculating this probability is an exercise in combinatorics. 
The result reveals that the prevalence curve of any $m$-pattern takes the same analytical form.
\begin{lemma}
Let $\mpat$ be a pattern consisting of $x_m$ $m$-node hyperedges, $x_{m-1}$ $(m-1)$-node hyperedges, ... , and $x_1$ $1$-node hyperedges. In addition, denote the number of missing $i$-node hyperedges by $y_i(x_i,x_{i+1},\ldots,x_m)$. For $N\ge m$ nodes and $0\le p \le 1$, the prevalence of $\mpat$, can be written,
\begin{equation}
    P(\mpat) = \gamma_\mpat\prod_{i=1}^m p_i^{x_i}(1-p_i)^{y_i(x_{i}, x_{i+1},...,x_m)}.
    \label{eq:analytical_expression}
\end{equation}
Here, $\gamma_{\mpat}\in\mathbb{N}$ is a combinatorial factor and $p_i$ is the probability that $i$ nodes chosen uniformly at random from the $N$ nodes, are connected by an $i$-simplex,
\begin{equation}
    1-p_i = (1-p)^{c_i},
    \label{eq:p_i}
\end{equation} 
where we defined ${N-m \choose m-i} = c_i$.
\label{lem:mpat_analytical}
\end{lemma}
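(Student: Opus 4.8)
The plan is to compute $P(\mpat)$ as the probability that one fixed set of $m$ nodes, say $\mathcal{V}_I=\{v_1,\dots,v_m\}$, forms the pattern $\mpat$. By the vertex-symmetry of the $G^{(m)}(N,p)$ model every $m$-set behaves identically, so by linearity of expectation this single probability equals the expected fraction of $m$-sets realizing $\mpat$, which is the quantity called prevalence. I would first pass to \emph{labelled} patterns: writing $\Pr[\text{maximal induced subhypergraph}\cong\mpat]=\sum_{L}\Pr[\text{maximal induced subhypergraph}=L]$ over all labellings $L$ of $\mpat$ on $\{v_1,\dots,v_m\}$, and noting that each labelling has the same probability by symmetry, produces a factor $\gamma_\mpat=m!/|\mathrm{Aut}(\mpat)|\in\mathbb{N}$ times the probability for one fixed labelling. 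It then remains to evaluate that single probability and read off the product form.

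The key structural observation is an independence statement. For each nonempty $S\subseteq\mathcal{V}_I$, let $A_S$ be the event that $S$ appears as an induced hyperedge, i.e.\ that some inserted $m$-hyperedge $e$ satisfies $e\cap\mathcal{V}_I=S$. The $m$-hyperedges witnessing $A_S$ are exactly those consisting of the $|S|$ nodes of $S$ together with $m-|S|$ of the $N-m$ outside nodes, and there are $\binom{N-m}{m-|S|}=c_{|S|}$ of them. Since an $m$-hyperedge has a \emph{unique} intersection with $\mathcal{V}_I$, the witness sets for distinct $S$ are disjoint, so the events $\{A_S\}$ depend on disjoint blocks of independent edge-indicators and are mutually independent. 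Each $A_S$ fails precisely when all $c_{|S|}$ witnessing hyperedges are absent, giving $\Pr[A_S^c]=(1-p)^{c_{|S|}}=1-p_{|S|}$, which is Eq.~\eqref{eq:p_i}.

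Next I would characterize the target event through these $A_S$. Because a maximal induced subhypergraph is simple, its hyperedge set $\mathcal{M}$ is an antichain, and I claim the maximal induced subhypergraph equals $L$ (with maximal edges $\mathcal{M}$) iff every $S\in\mathcal{M}$ is connected and no $S$ outside the down-closure $\downarrow\!\mathcal{M}=\{S:S\subseteq M\text{ for some }M\in\mathcal{M}\}$ is connected. The point is that subsets lying \emph{strictly} inside the down-closure impose no constraint: present or not, they are dominated by a present maximal edge and so never change the set of maximal elements, contributing a free factor $p_{|S|}+(1-p_{|S|})=1$. Using independence, the probability of the event factorizes as $\prod_{S\in\mathcal{M}}p_{|S|}\cdot\prod_{S\notin\downarrow\mathcal{M}}(1-p_{|S|})$; grouping by cardinality with $x_i=|\{S\in\mathcal{M}:|S|=i\}|$ and $y_i=|\{S:|S|=i,\ S\notin\downarrow\!\mathcal{M}\}|$ (the genuinely missing $i$-node hyperedges) yields $\prod_i p_i^{x_i}(1-p_i)^{y_i}$, and multiplying by $\gamma_\mpat$ gives Eq.~\eqref{eq:analytical_expression}.

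I expect the main obstacle to be the maximality bookkeeping in the third step rather than the probability computation: one must verify that the nonempty subsets split into three classes — the maximal edges $\mathcal{M}$, the subsets strictly contained in some maximal edge, and the subsets contained in none — and that only the first (forced connected) and the last (forced disconnected) carry constraints, the middle class being free. One should also confirm that $y_i$ is determined by $\mpat$, so that the expression depends on the pattern only through the exponents. A useful sanity check is the boundary cases: the repeat-collaboration pattern $\mathcal{M}=\{\mathcal{V}_I\}$ has $c_m=1$, all $y_i=0$, and $\gamma_\mpat=1$, recovering $P=p_m=p$, while the empty pattern $\mathcal{M}=\emptyset$ gives $\prod_i(1-p_i)^{\binom{m}{i}}$, both consistent with the monotone curves noted for Figure~\ref{fig:random_m-patterns}.
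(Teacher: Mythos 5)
Your proof is correct, and it takes the same route the paper intends: the paper reduces the prevalence to the probability that one fixed $m$-set forms $\mpat$ and dismisses the remaining computation as ``an exercise in combinatorics,'' which your argument carries out completely. Your two key steps --- mutual independence of the events $A_S$ because each inserted hyperedge $e$ witnesses exactly one $S = e\cap\mathcal{V}_I$, giving $\Pr[A_S]=p_{|S|}$ with $c_{|S|}=\binom{N-m}{m-|S|}$ witnesses, and the classification of subsets into maximal edges (factor $p_{|S|}$), dominated subsets (free factor $1$), and genuinely missing subsets (factor $1-p_{|S|}$) --- are precisely the content the paper omits, and both are handled correctly. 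One detail where you are more careful than the paper's own notation: $y_i$ as you define it, $|\{S : |S|=i,\ S\notin\,\downarrow\!\mathcal{M}\}|$, is determined by the structure of $\mpat$ and not by the counts $(x_i,\dots,x_m)$ alone, contrary to what the symbol $y_i(x_i,\dots,x_m)$ in the lemma statement suggests; for instance, on $m=4$ nodes the pattern with two disjoint $2$-node hyperedges and the pattern with two $2$-node hyperedges sharing a node (plus an isolated vertex) have identical counts but $y_1=0$ and $y_1=1$, respectively. Your version of the bookkeeping is the one consistent with how the formula is actually used later in the paper (e.g.\ in the proof of Theorem~\ref{thm:pure_patterns}), so this is a feature of your write-up, not a gap.
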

The combinatorial factor $\gamma_\mpat$ counts the number of isomorphic configurations of $\mpat$ that exists on $m$ nodes. 
Hence, the prevalence curve of a labelled version of $m$-patterns can be obtained by setting $\gamma_{\mpat}=1$. 
A side-effect of this fact is that all labelled versions of an $m$-pattern are equally likely under the $G^{(m)}(N,p)$ model. 
\begin{lemma}
\label{lem:zeroone}
For any $\epsilon>0$ and large enough $N$, the values of $p$ at which $p_l=a$, for $0<a<1$, $p_k$ take the values
\begin{equation}
    p_k
    \begin{cases}
        >1-\epsilon, & \text{ if } k\le l-1, \\
        < \epsilon, & \text{ if }k \ge l+1.
    \end{cases}
\end{equation}
\end{lemma}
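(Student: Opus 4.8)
The plan is to reduce everything to the defining relation $1-p_i=(1-p)^{c_i}$ from Equation~\eqref{eq:p_i} together with the asymptotics of the binomial coefficients $c_i=\binom{N-m}{m-i}$, and to avoid any direct manipulation of $p$ itself. First I would use the constraint $p_l=a$ to pin down the relevant $p$. Writing $q=1-p$, the equation $1-p_l=q^{c_l}=1-a$ gives $q=(1-a)^{1/c_l}$, which is legitimate because $p_l=1-(1-p)^{c_l}$ is continuous and strictly increasing in $p$ on $[0,1]$ with $p_l=0$ at $p=0$ and $p_l=1$ at $p=1$ (using $c_l\ge 1$), so a unique such $p$ exists for every $0<a<1$. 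Substituting this $q$ back into the relation for an arbitrary index $k$ yields the single clean identity
\begin{equation}
    1-p_k = q^{c_k} = (1-a)^{c_k/c_l},
    \label{eq:pk_ratio}
\end{equation}
so the whole behavior of $p_k$ at this distinguished value of $p$ is controlled by the exponent $c_k/c_l$.

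Next I would analyze this exponent. For $N$ large the argument $m-i$ stays in the increasing regime $0\le m-i\le m-1\le (N-m)/2$, so $c_i=\binom{N-m}{m-i}$ is strictly decreasing in $i$, and consecutive coefficients satisfy
\begin{equation*}
    \frac{c_i}{c_{i+1}} = \frac{\binom{N-m}{m-i}}{\binom{N-m}{m-i-1}} = \frac{N-2m+i+1}{m-i},
\end{equation*}
which tends to infinity as $N\to\infty$ for every fixed $i<m$. From this I would conclude the two halves of the lemma. For $k\le l-1$ monotonicity gives $c_k\ge c_{l-1}$, hence $c_k/c_l\ge c_{l-1}/c_l\to\infty$; since $0<1-a<1$, Equation~\eqref{eq:pk_ratio} then forces $1-p_k\to 0$, i.e. $p_k\to 1$. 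Symmetrically, for $k\ge l+1$ we have $c_k\le c_{l+1}$, hence $c_k/c_l\le c_{l+1}/c_l\to 0$, and raising the fixed base $1-a$ to a vanishing power gives $1-p_k\to(1-a)^0=1$, i.e. $p_k\to 0$.

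Finally I would convert these limits into the stated $\epsilon$-bounds with a single threshold. For each of the finitely many indices $k\in\{1,\dots,m\}\setminus\{l\}$ the convergence above supplies an $N_k$ beyond which $p_k>1-\epsilon$ (when $k\le l-1$) or $p_k<\epsilon$ (when $k\ge l+1$); setting $N_0=\max_k N_k$ over this finite set gives one threshold valid for all indices at once, which is exactly what the lemma asserts. The only genuine obstacle, and it is really a bookkeeping matter rather than a conceptual one, is verifying cleanly that $c_{l-1}/c_l\to\infty$ and $c_{l+1}/c_l\to 0$ from the consecutive-ratio formula and that the finitely many per-index convergences can be combined into a uniform $N_0$; once the key identity \eqref{eq:pk_ratio} is in place, everything else follows from the monotonicity of $x\mapsto(1-a)^x$ and the behavior of a fixed number in $(0,1)$ raised to powers tending to $0$ or $\infty$.
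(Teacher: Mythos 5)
Your proposal is correct and follows essentially the same route as the paper's proof: both invert $p_l=a$ to obtain the key identity $1-p_k=(1-a)^{c_k/c_l}$ and then show the exponent $c_k/c_l$ tends to $\infty$ for $k\le l-1$ and to $0$ for $k\ge l+1$. The only difference is bookkeeping: the paper bounds $c_k/c_l$ via the explicit inequalities $\frac{n^k}{k^k}\le\binom{n}{k}\le\frac{n^k}{k!}$ and solves for explicit thresholds on $N$, whereas you use the consecutive-ratio formula for binomial coefficients together with monotonicity, finishing with a limit argument and a finite maximum over $k$ --- both suffice, since the lemma only asserts the bounds hold for ``large enough $N$.''
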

\begin{proof}
If $p_l=a$, Lemma~\ref{lem:mpat_analytical} allows us to find the corresponding value of $p$,
\begin{align}
    p &= 1-(1-a)^{1/c_l}.
\end{align}
Inserting this in the formula for $p_k$ gives us
\begin{equation}
    p_k = 1-(1-a)^{c_k/c_l}.
\end{equation}
Now let $k>l$. Then,
\begin{align}
    \frac{c_l}{c_k} &=\frac{\binom{N-m}{m-l}}{\binom{N-m}{m-k}},\\
    &\ge(N-m)^{k-l}\frac{(m-k)!}{(m-l)^{m-l}},\\
    &=(N-m)^{k-l}\beta^{-1}.
    \label{eq:firstbound_reciprocal}
\end{align}
Here we used the inequality
\begin{equation}
    \frac{n^k}{k^k}\le \binom{n}{k} \le \frac{n^k}{k!}
    \label{eq:inequalities}
\end{equation}
repeatedly and defined $\beta = (m-l)^{m-l}/(m-k)!$. Taking the reciprocal value of both sides of Eq.~\eqref{eq:firstbound_reciprocal} gives us the bound
\begin{equation}
    \frac{c_k}{c_l}\le (N-m)^{l-k}\beta.
\end{equation}
Because $0<(1-a)<1 $,
\begin{align}
    p_k &= 1-(1-a)^{c_k/c_l} \\
    &\le 1-(1-a)^{\beta(N-m)^{l-k}}.
\end{align}
What does $N$ need to be larger than, if $p_k<\epsilon$? Demanding that
\begin{equation}
    1-(1-a)^{\beta(N-m)^{l-k}}<\epsilon,
\end{equation}
ensures that $p_k<\epsilon$ and allows us to isolate $N$,
\begin{equation}
    N >m+\left[\beta \frac{\ln \left( 1-a \right)}{\ln \left( 1-\epsilon \right)}  \right]^{1/(k-l)}.
        \label{eq:N1}
\end{equation}
This proves half of the Lemma. For the other half, we now let $k<l$. With similar steps as in the previous case, we can get the bound,
\begin{equation}
    \frac{c_l}{c_k}\le \beta'(N-m)^{k-l},
\end{equation}
with $\beta' = (m-k)^{m-k}/(m-l)!$. Taking the reciprocal value of both sides, the bound becomes,
\begin{equation}
        \frac{c_k}{c_l}\ge \beta'^{-1}(N-m)^{l-k}.
\end{equation}
We now proceed in analogous manner as in the first half of the proof. With the bound on $c_k/c_l$,
\begin{align}
    p_k &= 1-(1-a)^{c_k/c_l} \\
    &\ge 1-(1-a)^{\beta'^{-1}(N-m)^{l-k}}.
\end{align}
If this final quantity is larger than $1-\epsilon$, $p_k$ is too. For what $N$ is this the case then? Setting the final expression larger than $1-\epsilon$ and isolating $N$ yields
\begin{equation}
    N>m+\left[\beta' \frac{\ln \epsilon}{\ln (1-a)} \right]^{1/l-k}.
    \label{eq:N2}
\end{equation}
We conclude that if $N$ is larger than both of the values given in Eqs.~\eqref{eq:N1} and \eqref{eq:N2}, 
\begin{equation}
        p_k
    \begin{cases}
        >1-\epsilon, & \text{ if } k\le l-1, \\
        < \epsilon, & \text{ if }k \ge l+1.
    \end{cases}
\end{equation}
This proves the Lemma.
\end{proof}
With Lemmas~\ref{lem:mpat_analytical} and~\ref{lem:zeroone}, we now present our proof of Theorem~\ref{thm:extreme-patterns}.
\begin{proof}
\textbf{(Theorem~\ref{thm:extreme-patterns})} If the pattern $\mpat$ is extreme, all factors in the analytical expression for its prevalence must be large enough that $P(\mpat)$ takes a larger value than $P(\mpat')$ for any other pattern $\mpat'$. By Lemma~\ref{lem:mpat_analytical}, $P(\mpat)$ contains factors $(1-p_{l+1})^{y_{l+1}}$ and $p_{H}^{x_H}$, with $y_{l+1},x_H\ne0$. By Lemma~\ref{lem:zeroone}, if for some $p$, $p_H$ takes a value bounded away from 0 and 1, then one can choose an $N$ large enough to make $p_k$ arbitrarily close to $1$, if $k\le H-1$. For any such $k$, $(1-p_k)$ then becomes arbitrarily close to $0$. Hence, if $P(\mpat)$ contains factors of both $p_H$ and $(1-p_{l+1})$, and $|H-l|\ge2$, $P(\mpat)\to 0$ for large enough $N$, which implies it cannot be extreme in the limit.
\end{proof}

Theorem~\ref{thm:extreme-patterns} settles that a large class of $m$-patterns cannot be extreme in the limit. 
A natural next question to ask is then, what patterns are extreme in the limit? 
Are some types of patterns bound to be extreme? 
Are some types of patterns only extreme for certain choices of $m$?

Proving such positive results appears to be more challenging than proving the negative results of Theorem~\ref{thm:extreme-patterns}. 
A useful concept in proving such positive results is what we call a \textit{pure} pattern.
\begin{definition}
\textbf{(Pure pattern)}
An $m$-pattern with no hyperedges other than all possible $k$-node hyperedges is a pure pattern. \label{def:pure_pattern}
\end{definition}
Pure patterns are easy to think about and work with because they contain only one kind of hyperedge, and there is only a single way of constructing each pure pattern. 
In Lemma~\ref{lem:mpat_analytical}, this means that $\gamma_\mpat = 1$ for any pure pattern. 
The simplicity of working with pure patterns has caused these patterns to play a central role in our results on which patterns are actually extreme. 
One important result concerns exactly these pure patterns (proof given in Appendix~\ref{app:proof:thm:pure_patterns}).
\begin{theorem}
All pure patterns are extreme in the limit.
\label{thm:pure_patterns}
\end{theorem}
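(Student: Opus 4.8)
The plan is to pin down the explicit form of $P(\mpat)$ for a pure pattern and then, for each large $N$, exhibit a single value of $p$ at which it beats every competitor. Fix the pure pattern $\mpat$ whose hyperedges are all $\binom{m}{k}$ of the $k$-node subsets. By Definition~\ref{def:pure_pattern} it has $x_k=\binom{m}{k}$, $x_i=0$ for $i\ne k$, and $\gamma_\mpat=1$. A $j$-subset is ``missing'' precisely when it is not contained in any of the pattern's hyperedges; since every $j$-subset with $j\le k$ lies inside some $k$-subset while no $j$-subset with $j>k$ does, Lemma~\ref{lem:mpat_analytical} gives
\begin{equation}
    P(\mpat)=p_k^{\binom{m}{k}}\prod_{i=k+1}^{m}(1-p_i)^{\binom{m}{i}}.
\end{equation}

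Next I would invoke Lemma~\ref{lem:zeroone} with $l=k$: fix a constant $a\in(0,1)$ and, for each $N$, take the (unique) $p=p(N)\in(0,1)$ with $p_k=a$. As $N\to\infty$ the lemma forces $p_j\to1$ for every $j<k$ and $p_j\to0$ for every $j>k$, while $p_k\equiv a$. Along this sequence every factor of $P(\mpat)$ with index $i>k$ has $(1-p_i)\to1$, so
\begin{equation}
    \lim_{N\to\infty}P(\mpat)=a^{\binom{m}{k}}>0,
\end{equation}
a positive constant independent of $N$.

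The heart of the argument is to show that every other $m$-pattern $\mpat'$ has strictly smaller prevalence along the same sequence. Writing $P(\mpat')=\gamma_{\mpat'}\prod_i p_i^{x_i'}(1-p_i)^{y_i'}$, I would split into two cases. First, if $\mpat'$ contains a hyperedge of size $i>k$ (so $x_i'\ge1$) or misses a subset of size $i<k$ (so $y_i'\ge1$), then $P(\mpat')$ acquires a factor $p_i^{x_i'}\le p_i\to0$ or $(1-p_i)^{y_i'}\le 1-p_i\to0$ respectively, while all remaining factors stay $\le1$ and $\gamma_{\mpat'}\le m!$; hence $P(\mpat')\to0$. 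Second, suppose $\mpat'$ has no hyperedge larger than $k$ and no missing subset smaller than $k$. Then no hyperedge can cover a $k$-subset except that subset itself, so the number of missing $k$-subsets is exactly $y_k'=\binom{m}{k}-x_k'$. Because $\mpat'\ne\mpat$ and simplicity forbids adding smaller hyperedges once all $k$-subsets are present, we must have $x_k'<\binom{m}{k}$, i.e.\ $y_k'\ge1$. Taking the limit (all factors with $i\ne k$ tend to $1$) yields $\lim_N P(\mpat')=\gamma_{\mpat'}a^{x_k'}(1-a)^{y_k'}\le m!(1-a)$.

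Finally I would combine the bounds. Since there are only finitely many $m$-patterns, $\lim_N\max_{\mpat'\ne\mpat}P(\mpat')\le m!(1-a)$, whereas $\lim_N P(\mpat)=a^{\binom{m}{k}}$. Choosing $a$ close enough to $1$ that $a^{\binom{m}{k}}>m!(1-a)$ --- possible because the left side tends to $1$ and the right to $0$ as $a\to1^-$ --- forces $P(\mpat)>P(\mpat')$ for all competitors once $N$ is large, so $\mpat$ is the most prevalent pattern at $p=p(N)$ and is therefore extreme in the limit. I expect the second case to be the main obstacle: a pattern missing only a few $k$-hyperedges actually gains a factor $a^{x_k'}>a^{\binom{m}{k}}$, and the argument survives only because each missing $k$-hyperedge is necessarily uncovered (there is no larger hyperedge to absorb it) and therefore contributes a decisive factor $(1-a)$, which the choice $a\to1$ turns against it. Making this trade-off quantitative while controlling the prefactors $\gamma_{\mpat'}$ uniformly is the crux of the proof.
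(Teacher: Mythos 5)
Your proof is correct and follows essentially the same route as the paper's: write $P(\mpat)$ explicitly via Lemma~\ref{lem:mpat_analytical}, then use Lemma~\ref{lem:zeroone} to pin $p_k$ near $1$ so that every competing pattern carries a vanishing factor of either $p_l$ (with $l\ge k+1$) or $(1-p_k)$. Your version is in fact somewhat more careful than the paper's brief argument --- the explicit case split on competitors, the observation that a missing $k$-subset must be uncovered when no larger hyperedge exists, and the uniform bound $\gamma_{\mpat'}\le m!$ with the final choice of $a$ satisfying $a^{\binom{m}{k}}>m!(1-a)$ make rigorous steps the paper leaves implicit.
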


Our next theorem requires a result for labelled $m$-patterns. We remind the reader that instances of labelled $m$-patterns are different from instances of $m$-patterns in that we do not group isomorphic maximal induced subhypergraphs together. In this case, Lemma~\ref{lem:mpat_analytical} still gives us the analytical expression for the prevalence of labelled $m$-patterns, but $\gamma = 1$ for all patterns.

The following two lemmas are proven in Appendices~\ref{app:proof:lem:pure_dominates_above_1/2} and 
\ref{app:proof:lem:pure_dominates_below_1/2}.

\begin{lemma}
For labelled $m$-patterns  and $N\to\infty$, when $p_k>\frac{1}{2}$, the pure pattern containing only $k$-node hyperedges is more frequent than all patterns consisting of both $k$-node hyperedges and $(k-1)$-node hyperedges.
\label{lem:pure_dominates_above_1/2}
\end{lemma}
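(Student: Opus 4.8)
The plan is to work directly with the closed-form prevalence of Lemma~\ref{lem:mpat_analytical}, which for labelled patterns has $\gamma_\mpat=1$, and to compare the pure $k$-pattern against an arbitrary competitor $\mpat'$ built only from $k$-node and $(k-1)$-node hyperedges by forming the ratio of their prevalences and letting $N\to\infty$ with $p_k$ held fixed at a value above $\tfrac12$ (so that $p=1-(1-p_k)^{1/c_k}$ varies with $N$, exactly the regime of Lemma~\ref{lem:zeroone}).

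First I would read off the exponents $x_i$ and $y_i$ of both patterns from the maximality structure. For the pure $k$-pattern every $k$-subset of the $m$ nodes is a hyperedge, so $x_k=\binom{m}{k}$; every subset of size $i<k$ lies inside some $k$-node hyperedge and therefore carries no absence constraint, giving $y_i=0$ for all $i<k$; and every subset of size $i>k$ is forbidden, giving $y_i=\binom{m}{i}$. For a competitor $\mpat'$ with $a\ge1$ hyperedges of size $k$ and $b\ge1$ hyperedges of size $k-1$, the decisive structural observation is that a maximal $(k-1)$-node hyperedge cannot be contained in any $k$-node hyperedge, so all $m-k+1$ of its $k$-supersets must be absent; in particular $a<\binom{m}{k}$, i.e.\ $\binom{m}{k}-a\ge1$. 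The remaining exponents of $\mpat'$ are $y_k'=\binom{m}{k}-a$, some $y_{k-1}'\ge0$ and lower $y_i'\ge0$, and $y_i'=\binom{m}{i}$ for $i>k$, matching the pure pattern above level $k$.

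Next I would form the ratio $P(\mpat')/P(\text{pure }k)$. The factors at levels $i>k$ cancel, the level-$k$ factor collapses to $\bigl(\tfrac{1-p_k}{p_k}\bigr)^{\binom{m}{k}-a}$, and the levels $i\le k-1$ contribute $p_{k-1}^{\,b}(1-p_{k-1})^{y_{k-1}'}\prod_{i<k-1}(1-p_i)^{y_i'}$, every factor of which is at most $1$. I would then invoke Lemma~\ref{lem:zeroone}: holding $p_k$ fixed in $(\tfrac12,1)$ and sending $N\to\infty$ drives $p_i\to1$ for $i<k$ and $p_i\to0$ for $i>k$. Hence $p_{k-1}^{\,b}\to1$ while each surviving $(1-p_i)$ term tends to $0$ or stays equal to $1$, so the entire level-$\le k-1$ block has limit at most $1$. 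Since $p_k>\tfrac12$ forces $\tfrac{1-p_k}{p_k}<1$ and the exponent $\binom{m}{k}-a\ge1$, the decisive level-$k$ factor is a constant strictly below $1$. The ratio therefore tends to a value no larger than $\tfrac{1-p_k}{p_k}<1$, so for all sufficiently large $N$ the pure pattern is strictly the more frequent of the two; as there are only finitely many competing patterns for fixed $m$, one threshold $N$ serves for all of them, giving the claim.

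The main obstacle I anticipate is the bookkeeping of the missing-hyperedge exponents $y_i$ under the maximality constraint: one must argue carefully that the pure pattern genuinely has $y_i=0$ for every $i<k$, so that it pays no $(1-p_i)$ penalty precisely at the levels where $p_i\to1$, and that any competitor carrying a $(k-1)$-node hyperedge is forced to drop at least one $k$-node hyperedge, which is exactly what makes the level-$k$ factor decisive. Once those two combinatorial facts are settled, the limiting behaviour supplied by Lemma~\ref{lem:zeroone} renders every remaining factor harmless, and the threshold $p_k=\tfrac12$ emerges naturally as the point where $\tfrac{1-p_k}{p_k}$ crosses $1$.
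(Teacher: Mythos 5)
Your proof is correct and takes essentially the same route as the paper's: both compare the closed-form prevalences from Lemma~\ref{lem:mpat_analytical} for the pure pattern and a mixed competitor, use Lemma~\ref{lem:zeroone} to drive $p_{k-1}\to 1$ (and all lower $p_i\to 1$), and reduce everything to the decisive level-$k$ factor $\bigl(\tfrac{1-p_k}{p_k}\bigr)^{\binom{m}{k}-a}<1$ when $p_k>\tfrac12$. If anything, your version is slightly more careful than the paper's, since you explicitly note that maximality of a $(k-1)$-node hyperedge forces $\binom{m}{k}-a\ge 1$, and that any additional missing-hyperedge factors at levels $\le k-1$ are at most $1$ and so only favor the pure pattern.
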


\begin{lemma}
For labelled $m$-patterns and $N\to\infty$, when $0<p_{k+1}<\frac{1}{2}$, the pure pattern containing only $k$-node hyperedges is more frequent than all patterns consisting of both $(k+1)$-node hyperedges and $k$-node hyperedges.
\label{lem:pure_dominates_below_1/2}
\end{lemma}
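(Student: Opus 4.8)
The plan is to compare the two prevalences directly through their ratio, using the closed form of Lemma~\ref{lem:mpat_analytical} with $\gamma=1$ (since we work with labelled patterns) together with the limiting behaviour of the $p_i$ supplied by Lemma~\ref{lem:zeroone}. First I would record the two quantities. Writing $\binom{m}{i}$ for the number of labelled $i$-subsets of the $m$ nodes, the pure pattern $\mpat_k$ has all $\binom{m}{k}$ $k$-node hyperedges present and every larger subset forbidden, so Lemma~\ref{lem:mpat_analytical} gives $P(\mpat_k)=p_k^{\binom{m}{k}}\prod_{i>k}(1-p_i)^{\binom{m}{i}}$; every $i$-subset with $i<k$ sits inside a $k$-edge and is therefore free, contributing nothing. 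For a competing pattern $\mpat$ built from $\alpha\ge 1$ hyperedges of size $k+1$ and $\beta\ge 1$ hyperedges of size $k$, I would read off $x_{k+1}=\alpha$, $x_k=\beta$, $y_{k+1}=\binom{m}{k+1}-\alpha$, and $y_k=\binom{m}{k}-\beta-|C|$, where $C$ is the set of $k$-subsets contained in one of the $\alpha$ larger hyperedges; for $i\ge k+2$ all subsets are forbidden exactly as in $\mpat_k$, and for $i<k$ there are some number $y_i\ge 0$ of forbidden subsets.

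The main computation is then the ratio $P(\mpat)/P(\mpat_k)$. All factors of size $i\ge k+2$ are identical in the two patterns and cancel, while the factors of size $i<k$ contribute $\prod_{i<k}(1-p_i)^{y_i}\le 1$, which can only help. What remains are the levels $k$ and $k+1$, and after simplification the ratio reduces to $\left(\frac{p_{k+1}}{1-p_{k+1}}\right)^{\alpha}\,p_k^{-(\binom{m}{k}-\beta)}(1-p_k)^{\binom{m}{k}-\beta-|C|}$ times the harmless low-order factors. The crucial structural fact I would establish here is that the $\beta$ genuine $k$-edges and the $|C|$ covered $k$-subsets form disjoint families inside the $\binom{m}{k}$ $k$-subsets (a $k$-edge of a simple pattern cannot lie inside a present $(k+1)$-edge), so the exponent $\binom{m}{k}-\beta-|C|$ is nonnegative.

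To finish I would fix the target value $p_{k+1}=a$ with $0<a<\tfrac12$ and apply Lemma~\ref{lem:zeroone} with $l=k+1$: as $N\to\infty$ we get $p_j\to 1$ for all $j\le k$ and $p_j\to 0$ for all $j\ge k+2$. Consequently $\left(\frac{p_{k+1}}{1-p_{k+1}}\right)^{\alpha}\to\left(\frac{a}{1-a}\right)^{\alpha}<1$ because $a<\tfrac12$ and $\alpha\ge 1$, while $p_k^{-(\binom{m}{k}-\beta)}\to 1$. If $\binom{m}{k}-\beta-|C|>0$ then $(1-p_k)^{\binom{m}{k}-\beta-|C|}\to 0$ and the whole ratio tends to $0$. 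If instead $\binom{m}{k}-\beta-|C|=0$, every $k$-subset is either an edge or covered, hence every lower subset is free and the low-order factors equal $1$, so the ratio tends to $\left(\frac{a}{1-a}\right)^{\alpha}<1$. In either case the limit is strictly below $1$, and since there are only finitely many patterns $\mpat$ for fixed $m$, I can take $N$ large enough to make $P(\mpat)<P(\mpat_k)$ simultaneously for all of them.

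The step I expect to be the main obstacle is the combinatorial bookkeeping of the forbidden counts $y_i$ rather than the limiting analysis itself: one must argue that the exponent $\binom{m}{k}-\beta-|C|$ is nonnegative, that the low-order factors never exceed $1$, and above all that the degenerate boundary case $\binom{m}{k}-\beta-|C|=0$ still yields a ratio strictly below $1$. This last case is exactly where the hypothesis $p_{k+1}<\tfrac12$ does the essential work, since it is only the factor $\frac{a}{1-a}<1$ --- and not any vanishing $(1-p_k)$ term --- that secures the strict inequality.
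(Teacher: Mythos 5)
Your proposal is correct and follows essentially the same route as the paper's own proof: write both prevalences in the closed form of Lemma~\ref{lem:mpat_analytical} with $\gamma=1$, use Lemma~\ref{lem:zeroone} (with $l=k+1$) to drive every factor at levels other than $k+1$ to its limit, and reduce the comparison to $p_{k+1}$ versus $1-p_{k+1}$, which the hypothesis $p_{k+1}<\tfrac{1}{2}$ settles. If anything, your bookkeeping is more complete than the paper's: the published proof writes the competing prevalence without any $(1-p_k)^{y_k}$ or lower-order factors (implicitly restricting to competitors in which every uncovered $k$-subset is itself an edge), whereas you explicitly treat the case $\binom{m}{k}-\beta-|C|>0$, where the vanishing $(1-p_k)$ factor kills the competitor outright, isolate the boundary case where only the $\frac{a}{1-a}<1$ factor does the work, and note the uniformity over the finitely many competing patterns.
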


\newcommand{\omt}[1]{}
\omt{
\begin{proof}
By Lemma~\ref{lem:zeroone}, if $p_{k+1}$ is non-zero, we can choose $N$ large enough $p_l\to1$ as $N\to \infty$ for $l\le k-1$. 
The prevalence of the pure pattern with $k$-node hyperedges is 
\begin{equation}
    P(\mpat_k) = p_k^{{m \choose k}}(1-p_{k+1})^{{m\choose k+1}}\Omega',
    \label{eq:kpure_elab}    
\end{equation}
where $\Omega'=\prod_{i=k+2}^m (1-p_i)^{{m\choose i}}$. 
The prevalence of the non-pure patterns containing $x_k$ $k$-node and $x_{k+1}$ $(k+1)$-node hyperedges is,
\begin{equation}
    P(\mpat_{k-1,k}') = p_{k}^{x_{k}} p_{k+1}^{x_{k+1}} (1-p_{k+1})^{{m\choose k+1}-x_{k+1}} \Omega'.
    \label{eq:k+1_nonpure}
\end{equation}
By Lemma~\ref{lem:zeroone}, for any $\epsilon>0$ and large enough $N$, the first factor in both Eqs~\eqref{eq:kpure_elab} and \eqref{eq:k+1_nonpure} can get arbitrarily close to $1$. 
Hence, the pure and non-pure patterns above cross when the remaining factors are equal in Eqs~\eqref{eq:kpure_elab} and \eqref{eq:k+1_nonpure}. This happens when $p_{k+1}=\frac{1}{2}$. 
For lower values of $p_{k+1}$, $(1-p_{k+1})>p_{k+1}$. 
This proves the lemma.
\end{proof}
}

These two Lemmas and Theorem~\ref{thm:extreme-patterns} give us the following interesting result.
\begin{theorem}
For labelled $m$-patterns, only pure patterns are extreme in the limit.
\label{thm:labelled_extreme}
\end{theorem}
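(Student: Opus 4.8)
The plan is to prove the contrapositive: every non-pure labelled $m$-pattern is, in the limit $N\to\infty$, strictly less prevalent than some pure pattern at every admissible $p$, so it can never be the most prevalent pattern and hence is not extreme in the limit. The engine is the zero--one behaviour of the $p_i$ from Lemma~\ref{lem:zeroone}. Since $c_i=\binom{N-m}{m-i}$ decreases in $i$, the probabilities are ordered $p_1\ge p_2\ge\cdots\ge p_m$, and by Lemma~\ref{lem:zeroone} no two consecutive $p_i$ can both stay bounded away from $0$ and $1$. Thus for large $N$ and any fixed $p$ there is a single \emph{active index} $t$ with $p_i\to 1$ for $i<t$, $p_t\to a\in(0,1)$, and $p_i\to 0$ for $i>t$. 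As $p$ sweeps $(0,1)$ the pair $(t,a)$ ranges over all of $\{1,\dots,m\}\times(0,1)$, so it suffices to show that at each $(t,a)$ the maximizer of the prevalence formula in Lemma~\ref{lem:mpat_analytical} (with $\gamma=1$) is a pure pattern.

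First I would fix $(t,a)$ and read off which patterns survive the limit. The labelled prevalence is $\prod_i p_i^{x_i}(1-p_i)^{y_i}$. For $i<t$ the factor $(1-p_i)^{y_i}\to 0$ unless $y_i=0$, and for $i>t$ the factor $p_i^{x_i}\to 0$ unless $x_i=0$. Hence any pattern with a present hyperedge of size $>t$ or a missing (uncovered) hyperedge of size $<t$ has prevalence $\to 0$; I call the rest \emph{contenders}. This step already recovers Theorem~\ref{thm:extreme-patterns}: a pattern with a present $H$-edge and a missing $(l+1)$-edge with $H-l\ge 2$ can be a contender for no $t$, since that would force $H\le t\le l+1$ with $H\ge l+2$.

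The crux is the comparison among contenders at a fixed $t$. For a contender every missing $t$-subset is uncovered (no present edge of size $>t$), so each of the $\binom{m}{t}$ $t$-subsets is present or missing-uncovered; writing $B=\binom{m}{t}$ this yields the identity $x_t+y_t=B$, and the surviving factor is $a^{x_t}(1-a)^{B-x_t}=(1-a)^B\bigl(a/(1-a)\bigr)^{x_t}$. This is monotone in $x_t$, hence maximized over $x_t\in\{0,\dots,B\}$ only at a corner: at $x_t=B$ when $a>\tfrac12$ and at $x_t=0$ when $a<\tfrac12$. The corner $x_t=B$ means all $t$-subsets are present, which by simplicity forces the absence of every smaller edge, i.e.\ the pure pattern on $t$-edges; the corner $x_t=0$ together with $y_i=0$ for $i<t$ forces all $(t-1)$-subsets present, i.e.\ the pure pattern on $(t-1)$-edges. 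These two statements are precisely Lemmas~\ref{lem:pure_dominates_above_1/2} and~\ref{lem:pure_dominates_below_1/2}. Consequently any contender with $0<x_t<B$ is beaten strictly (for $a\ne\tfrac12$), the only maximizers are pure patterns, and since every non-pure pattern lands in a beaten class at every $(t,a)$, none is extreme in the limit.

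The main obstacle is not the combinatorics but making the limiting comparison uniform over $p$ and handling the degenerate cases. I expect three points to require real care. (i) The identity $x_t+y_t=B$ is what lets the argument absorb the \emph{incomplete single-size} patterns (e.g.\ all-but-one $t$-subset), which are not literally of the mixed-size form in Lemmas~\ref{lem:pure_dominates_above_1/2} and~\ref{lem:pure_dominates_below_1/2}; one must check such a pattern is a contender only when all its smaller subsets are covered, and is otherwise killed as a non-contender. (ii) The pointwise limit at fixed $(t,a)$ must be upgraded to a statement for all $p$ and all large $N$: this means controlling the transition regions where the active index changes (via the explicit thresholds in the proof of Lemma~\ref{lem:zeroone}) and verifying that when two contenders share the leading factor $a^{x_t}(1-a)^{y_t}$ the next-order $i\ne t$ terms break the tie for the pure pattern. (iii) The knife-edge $a=\tfrac12$ and the boundary index $t=1$ deserve comment: at $t=1$ the winning ``pure $(t-1)$'' pattern is the empty pattern (no collaborations), which should be recorded as the degenerate pure pattern (the $k=0$ case of Definition~\ref{def:pure_pattern}) so that ``only pure patterns are extreme'' stays literally true.
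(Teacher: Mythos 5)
Your proposal follows essentially the same route as the paper: the zero--one behaviour of the $p_i$ (Lemma~\ref{lem:zeroone}) reduces everything to a single transition index, non-adjacent mixtures are eliminated exactly as in Theorem~\ref{thm:extreme-patterns}, and your corner-maximization of $a^{x_t}(1-a)^{B-x_t}$ over contenders is precisely the content of Lemmas~\ref{lem:pure_dominates_above_1/2} and~\ref{lem:pure_dominates_below_1/2}, which is exactly how the paper assembles the theorem. Your version is in fact more explicit than the paper's one-sentence derivation, and the caveats you flag (uniformity in $p$, tie-breaking at $p_t=\tfrac{1}{2}$ via the $i\ne t$ factors, and treating the empty pattern as the degenerate pure pattern) are genuine points the paper glosses over rather than defects in your argument.
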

Moreover, the arguments leading to Lemmas~\ref{lem:pure_dominates_above_1/2} and~\ref{lem:pure_dominates_below_1/2} also lead us to the following Lemma  (see also Appendix~\ref{app:illustration:thm:non-pure}),
\begin{lemma}
For labelled $m$-patterns, all patterns consisting only of $(k+1)$-node hyperedges and all possible remaining $k$-node hyperedges are equally prevalent when $p_{k+1}=\frac{1}{2}$.
\label{lem:cross_1/2}
\end{lemma}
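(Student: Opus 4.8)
The plan is to show that every pattern in this family has the same prevalence in the $N\to\infty$ limit by reducing Eq.~\eqref{eq:analytical_expression} to a single surviving factor that is blind to the number of $(k+1)$-node hyperedges. The engine is Lemma~\ref{lem:mpat_analytical} for the exact form of $P(\mpat)$ (with $\gamma_\mpat=1$ since the patterns are labelled), combined with the zero--one behavior of the $p_i$ from Lemma~\ref{lem:zeroone} once we pin $p_{k+1}=\tfrac{1}{2}$.

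First I would write out Eq.~\eqref{eq:analytical_expression} for a generic member $\mpat$ of the family, say with $x_{k+1}$ hyperedges of size $k+1$ and all $x_k$ remaining $k$-subsets as $k$-node hyperedges. The decisive bookkeeping is the missing-hyperedge counts $y_i$. For every $i\le k$, an $i$-subset can be extended to a $k$-subset, and that $k$-subset is either itself a $k$-node hyperedge or is contained in one of the $(k+1)$-node hyperedges; in both cases the original $i$-subset lies inside some hyperedge, so $y_i=0$. For $i\ge k+2$ no hyperedge is large enough to contain an $i$-subset, so $y_i=\binom{m}{i}$. At the intermediate level, a $(k+1)$-subset is contained in a hyperedge exactly when it is one of the $x_{k+1}$ chosen ones, giving $y_{k+1}=\binom{m}{k+1}-x_{k+1}$. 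Collecting the nontrivial factors, the prevalence reads
\[ P(\mpat)=p_k^{x_k}\,p_{k+1}^{x_{k+1}}(1-p_{k+1})^{\binom{m}{k+1}-x_{k+1}}\prod_{i=k+2}^{m}(1-p_i)^{\binom{m}{i}}, \]
where $x_k$ is a fixed combinatorial constant (at most $\binom{m}{k}$) determined by the placement of the $(k+1)$-node hyperedges.

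Next I would take $N\to\infty$ while holding $p_{k+1}=\tfrac{1}{2}$. Applying Lemma~\ref{lem:zeroone} with $l=k+1$ and $a=\tfrac{1}{2}$, every $p_i$ with $i\le k$ tends to $1$ and every $p_i$ with $i\ge k+2$ tends to $0$. Since $x_k$ is bounded independently of $N$, the factor $p_k^{x_k}\to 1$, and likewise $\prod_{i=k+2}^{m}(1-p_i)^{\binom{m}{i}}\to 1$. Only the level-$(k+1)$ factor survives, and at $p_{k+1}=\tfrac{1}{2}$ it equals
\[ \left(\tfrac{1}{2}\right)^{x_{k+1}}\left(\tfrac{1}{2}\right)^{\binom{m}{k+1}-x_{k+1}}=\left(\tfrac{1}{2}\right)^{\binom{m}{k+1}}, \]
which no longer depends on $x_{k+1}$ or on the particular arrangement of the $(k+1)$-node hyperedges. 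Hence all patterns in the family share the common limiting prevalence $\left(\tfrac{1}{2}\right)^{\binom{m}{k+1}}$ at $p_{k+1}=\tfrac{1}{2}$, which is the claim.

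The only real obstacle is the combinatorial bookkeeping of the $y_i$, in particular confirming that ``saturating'' the pattern with all remaining $k$-node hyperedges forces $y_i=0$ for every $i\le k$. Once that is established the two potentially troublesome factors $p_k^{x_k}$ and $\prod_{i\ge k+2}(1-p_i)^{\binom{m}{i}}$ collapse to $1$, and the exponential cancellation at $p_{k+1}=\tfrac{1}{2}$ is immediate; everything else is a direct substitution from Lemmas~\ref{lem:mpat_analytical} and~\ref{lem:zeroone}.
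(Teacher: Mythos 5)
Your proof is correct and takes essentially the same route as the paper's: the lemma is obtained there from the argument of Appendix~\ref{app:proof:lem:pure_dominates_below_1/2}, which writes the same decomposition $p_k^{x_k}\,p_{k+1}^{x_{k+1}}(1-p_{k+1})^{\binom{m}{k+1}-x_{k+1}}\Omega'$ and invokes Lemma~\ref{lem:zeroone} to send the extraneous factors to $1$, so that at $p_{k+1}=\tfrac{1}{2}$ the surviving factor $\left(\tfrac{1}{2}\right)^{\binom{m}{k+1}}$ is independent of the pattern. Your explicit bookkeeping of the $y_i$ and of the limit of $\prod_{i\ge k+2}(1-p_i)^{\binom{m}{i}}$ merely spells out what the paper leaves implicit.
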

These results for labelled $m$-patterns help us prove the following more general theorem for non-labelled patterns.
\begin{theorem}
If $m\ge 3$ at least one non-pure pattern is extreme.
\label{thm:non-pure}
\end{theorem}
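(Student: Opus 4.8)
The plan is to exhibit a single value of $p$ (at a large but finite $N$) at which the most prevalent $m$-pattern is forced to be non-pure; since a pattern that attains the maximal prevalence is extreme by Definition~\ref{def:extreme}, this proves the theorem. The engine of the argument is the factorisation $P(\mpat)=\gamma_\mpat\,P_{\mathrm{lab}}(\mpat)$, where $P_{\mathrm{lab}}(\mpat)$ is the common prevalence of any single labelled version of $\mpat$ (the product in Eq.~\eqref{eq:analytical_expression} with $\gamma=1$); this identity holds because all labelled versions of a pattern are equally likely. Thus at any fixed $p$ the most prevalent pattern is the one maximising $\gamma_\mpat P_{\mathrm{lab}}(\mpat)$, and it suffices to locate a $p$ at which every pure pattern is beaten by some non-pure one.

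I would take $k=1$ and work at the crossover value $p^{*}=p^{*}(N)$ defined by $p_2=\tfrac12$. First, by Lemma~\ref{lem:zeroone} (with $l=2$, $a=\tfrac12$), as $N\to\infty$ we have $p_1\to1$ and $p_j\to0$ for every $j\ge3$. Consequently any pattern containing a hyperedge of size $i\ge3$ carries a factor $p_i^{x_i}$ with $x_i\ge1$ and therefore has $P(\mpat)\to0$. The only patterns with non-vanishing prevalence at $p^{*}$ are those all of whose hyperedges have size $1$ or $2$; by maximality these are exactly the patterns appearing in Lemma~\ref{lem:cross_1/2} for $k=1$, namely a graph of $2$-node hyperedges together with singletons on the uncovered nodes. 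By Lemma~\ref{lem:cross_1/2} all of these have the \emph{same} labelled prevalence $L$ at $p_2=\tfrac12$, and a direct evaluation of Eq.~\eqref{eq:analytical_expression} gives $L\to(1/2)^{\binom{m}{2}}>0$.

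I would then compare non-labelled prevalences. Every pure pattern has $\gamma_\mpat=1$, so the two surviving pure patterns (pure-$1$ and pure-$2$) both satisfy $P\to L$, while pure-$j$ for $j\ge3$ vanishes. On the other hand, for $m\ge3$ the family of Lemma~\ref{lem:cross_1/2} contains the non-pure pattern with a single $2$-node hyperedge and $m-2$ singletons, whose combinatorial factor is $\gamma=\binom{m}{2}\ge3$; hence its prevalence tends to $\binom{m}{2}L>L$. Since the vanishing (size-$\ge3$) patterns tend to $0<\binom{m}{2}L$ and every pure pattern tends to at most $L$, in the limit the global maximiser of $P(\cdot)$ at $p^{*}$ is attained only by non-pure patterns. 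Because these limits are strict and $L$ is a positive constant, the inequalities persist for all sufficiently large finite $N$: at $p=p^{*}(N)$ the most prevalent $m$-pattern is non-pure, and any such maximiser is extreme.

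The main obstacle is not the limiting comparison itself but transferring it cleanly to a finite $N$ while ruling out every competing pattern simultaneously. Choosing $k=1$ is what makes this manageable: there are no arities below the crossover, so I only need the one-sided conclusion $p_j\to0$ for $j\ge3$ to annihilate all non-family patterns, avoiding the more delicate two-sided estimates (both $p_j\to1$ and $p_j\to0$) that a general crossover would require. The remaining care is quantitative: one must check that the gap between $\binom{m}{2}L$ and the next-largest prevalence stays bounded below as $N\to\infty$, so that strictness survives at finite $N$; and, if Definition~\ref{def:extreme} is read as requiring a strict maximum, one either notes that the maximising set is non-empty and free of pure patterns (hence contains an extreme non-pure pattern) or perturbs $p$ slightly away from $p^{*}$ to break any ties among non-pure patterns in favour of a chosen one.
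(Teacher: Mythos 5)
Your proposal is correct and takes essentially the same route as the paper's own proof: both pick a crossover value where $p_{k+1}=\tfrac{1}{2}$, use Lemma~\ref{lem:cross_1/2} to equalize the labelled prevalences of the competing patterns, exploit the fact that a non-pure pattern in that family has combinatorial factor $\gamma\ge 2$ while every pure pattern has $\gamma=1$, and use Lemma~\ref{lem:zeroone} to make all other (pure) patterns negligible, so the maximizer must be non-pure. The differences are only in the instantiation \--- you work at $p_2=\tfrac{1}{2}$ with the witness pattern consisting of one $2$-node hyperedge plus singletons ($\gamma=\binom{m}{2}$), whereas the paper works at $p_{m-1}=\tfrac{1}{2}$ with the pattern missing a single $(m-1)$-node hyperedge \--- and you are somewhat more careful about ties and the transfer from the limit to finite $N$.
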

\begin{proof}
If $m\ge 3$, non-pure patterns exist that do not violate Theorem~\ref{thm:extreme-patterns}. 
Since we are not dealing with labelled patterns, the combinatorial factor $\gamma$ is some integer larger than or equal to $1$ for each pattern. 
For pure patterns $\gamma_\mpat = 1$. 
Now focus at the point $p_{m-1}=\frac{1}{2}$. 
From Lemma~\ref{lem:cross_1/2}, prevalence curves for several pure and non-pure \textit{labelled} patterns cross at this point in the large-$N$ limit. 
At least one of the corresponding non-labelled non-pure patterns has $\gamma_\mpat \ge 2$. 
For example, the pattern missing a single $p_{m-1}$-node hyperedge and containing ${m-1 \choose m-2}$ $(m-2)$-node hyperedges instead has $\gamma_\mpat = m-1$. 
Hence, in this point, at least this non-pure pattern is more prevalent than the two pure patterns containing $(m-2)$-node and $(m-1)$-node hyperedges. 
For this reason, and Lemma~\ref{lem:zeroone}, it is more prevalent than all pure patterns. 
This proves the Theorem.
\end{proof}

Having shown that all pure patterns are extreme and that some none-pure patterns are extreme, too, we present a final result that shows that a large number of potentially extreme patterns are not extreme (proof in Appendix~\ref{app:proof:thm:two_patterns_not_extreme}). 
\begin{theorem}
Two different $m$-patterns that have different combinatorial factors and consist only of $x_k$ $k$-node hyperedges and all possible remaining $(k-1)$ hyperedges cannot both be extreme in the limit.
\label{thm:two_patterns_not_extreme}
\end{theorem}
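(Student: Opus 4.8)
The plan is to feed the analytical prevalence formula of Lemma~\ref{lem:mpat_analytical} into the large-$N$ limit controlled by Lemma~\ref{lem:zeroone}, and to show that in the only $p$-regime where a pattern of this shape can be extreme, the two prevalence curves become \emph{proportional}, with proportionality constant equal to the ratio of their combinatorial factors. Once the curves are proportional, the pattern with the smaller $\gamma_{\mpat}$ is strictly dominated by its sibling throughout that regime and can never be the global maximum, so the two cannot both be extreme.

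First I would pin down the exponents $x_i,y_i$ in Eq.~\eqref{eq:analytical_expression} for a pattern of this type. Writing $M=\binom{m}{k}$, such a pattern has $x_k$ hyperedges of size $k$ and, by construction, exactly those $(k-1)$-subsets not already covered by a $k$-edge, so $y_k=M-x_k$. The decisive structural observation is that \emph{every} subset of size at most $k-1$ sits inside some present hyperedge: a $(k-1)$-subset is either itself a present edge or is contained in a $k$-edge, and any smaller subset extends to such a $(k-1)$-subset. Hence $y_i=0$ for all $i\le k-1$, while $x_i=0$ and $y_i=\binom{m}{i}$ for $i>k$. The prevalence therefore collapses to
\begin{equation}
P(\mpat)=\gamma_{\mpat}\, p_k^{x_k}(1-p_k)^{M-x_k}\, p_{k-1}^{x_{k-1}}\prod_{i=k+1}^{m}(1-p_i)^{\binom{m}{i}},
\end{equation}
in which only $\gamma_{\mpat}$ and the arrangement-dependent count $x_{k-1}$ distinguish the two patterns, since they share $x_k$ by hypothesis.

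Next I would send $N\to\infty$ with $p_k=a$ held fixed, $0<a<1$. By Lemma~\ref{lem:zeroone} this forces $p_{k-1}\to1$ and $p_i\to0$ for every $i>k$, so the factor $p_{k-1}^{x_{k-1}}$ and the whole product over $i>k$ tend to $1$. The arrangement-dependent $x_{k-1}$ thus disappears in the limit, leaving
\begin{equation}
P(\mpat)\to\gamma_{\mpat}\,a^{x_k}(1-a)^{M-x_k}.
\end{equation}
Because the two patterns have the same $x_k$, their limiting curves coincide up to the constant ratio of their combinatorial factors, which is different from $1$ by assumption. The pattern with the larger factor is therefore strictly more prevalent at every $a\in(0,1)$ — in particular at the common peak $a=x_k/M$, the point where either pattern is most competitive against the rest of the family — so the one with the smaller factor is never the maximum within this window.

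It remains to rule out extremeness in the complementary regimes where $p_k\to0$ or $p_k\to1$. Here the factor $p_k^{x_k}$ (with $x_k\ge1$) for small $p_k$, respectively $(1-p_k)^{M-x_k}$ for large $p_k$, drives the prevalence of either mixed pattern below that of a pure pattern, and I would invoke the pure-pattern domination of Lemmas~\ref{lem:pure_dominates_below_1/2} and~\ref{lem:pure_dominates_above_1/2} to supply the dominating pure pattern of size $k-1$ (small $p_k$) or size $k$ (large $p_k$); since a fixed $\gamma_{\mpat}$ cannot offset the exponential suppression as $a\to0$ or $a\to1$, the mixed patterns lose there for large $N$. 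Combining the two regimes shows the smaller-$\gamma_{\mpat}$ pattern is never the global maximum, hence not extreme in the limit. I expect the main obstacle to be exactly this coordination: verifying that the arrangement-dependent factors vanish in the limit while simultaneously making rigorous that the window $p_k\in(0,1)$ bounded away from the endpoints is the \emph{only} place such a pattern can be extreme, so that domination by the larger-$\gamma_{\mpat}$ sibling inside the window genuinely suffices.
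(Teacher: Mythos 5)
Your proposal is correct and follows essentially the same route as the paper's own proof: write both prevalences via Lemma~\ref{lem:mpat_analytical}, observe they share the factor $p_k^{x_k}(1-p_k)^{\binom{m}{k}-x_k}\Omega$ and differ only in $\gamma_{\mpat}$ and the $p_{k-1}^{x_{k-1}}$ factor, then use Lemma~\ref{lem:zeroone} to send that factor to $1$ so the ratio of prevalences tends to the ratio of combinatorial factors, dominating the smaller-$\gamma_{\mpat}$ pattern. Your additional treatment of the regimes $p_k\to0$ and $p_k\to1$ via the pure-pattern lemmas is extra care the paper's proof leaves implicit, but it does not change the substance of the argument.
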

We note that in cases where several patterns compete for being extreme as described in Theorem~\ref{thm:two_patterns_not_extreme}, the pattern that actually gets to be extreme in the limit can have very different structure depending on $m$. 
The reason for this is that the combinatorial factor $\gamma$ depends on the value of $m$.
Take for example the two possible non-isomorphic patterns consisting of two two-node hyperedges and all remaining possible one-node hyperedges for $m\ge4$.
In one pattern the two $2$-node hyperedges share a node, whereas in the other, the $2$-node hyperedges are completely separate. 
For a given choice of $m$, there are $3{m \choose 3}$ ways of constructing the $m$-pattern with linked $2$-node hyperedges, and 3${m \choose 4}$ ways of constructing the pattern with separate $2$-node hyperedges. 
Hence, patterns with $2$-node hyperedges in sequence have larger combinatorial factors when $m\le 6$, the patterns have the same combinatorial factor if $m=7$ and patterns with parallel $2$-node hyperedges dominate when $m\ge 8$.

\section{Hypergraph patterns in empirical data}
The $G^{(m)}(N,p)$ model informs us how prevalent we should expect an $m$-pattern $\mpat$ to be in an $N$-node hypergraphs where a fraction $p$ of possible $m$-node hyperedges exist if the hyperedges were distributed uniformly randomly among all possible $m$-node hyperedges. This raises a natural question: 
In empirical datasets, are some $m$-patterns overrepresented and others underrepresented 
compared to the $G^{(m)}(N,p)$ null-model? 

\subsection{Academic coauthorship hypergraphs}
Making an informative comparison of $m$-patterns in empirical hypergraphs and the $G^{(m)}(N,p)$ model is not as straight forward as it sounds. Any empirical hypergraph has a fixed number of nodes and a given hyperedge density. For this reason, any comparison of the $G^{(m)}(N,p)$ model to an empirical hypergraph results in a comparison for just one value of $p$. Since one of the interesting features of the $G^{(m)}(N,p)$ model is how the prevalence of the $m$-patterns change with the hyperedge density $p$, we seek a large collection of hypergraphs with different hyperedge densities. We construct such a collection from the set of ego hypergraphs in empirical coauthorship hypergraphs. For each node $v$ in the coauthorship hypergraph, $\mathcal{H} = (\mathcal{V},\mathcal{E})$, we construct an ego hypergraph $\mathcal{H}_e=(\mathcal{V}_e,\mathcal{E}_e)$. $\mathcal{V}_e$ includes all neighbors of $v$, but not $v$ itself. $\mathcal{E}_e$ includes all $m$-node hyperedges between nodes in $\mathcal{V}_e$.  Furthermore, for any $m'$-node hyperedge ($m'\ge m+1$) in $\mathcal{E}$ that joins $m$ nodes from $V_e$ and ($m'-m$) nodes from $\mathcal{V}\setminus (\mathcal{V}_e \cup  v)$, we include a subhyperedge in $\mathcal{V}_e$ joining these nodes from $\mathcal{V}_e$.

Figure~\ref{fig:human_collab}A shows the prevalence of $3$-patterns in ego hypergraphs in a coauthorship hypergraph of scientists working in the field of Geology~\cite{benson_simplicial_2018}. These ego hypergraph have very diverse hyperedge densities, $p$ (horizontal axis). The ego hypergraphs also have different sizes, $N$. In the plot, we include results for all ego hypergraphs of sizes $10 \le N \le 50$. Since the prevalence of $m$-patterns depends on $N$ in the $G^{(m)}(N,p)$ model, the data points are not expected to fall on clear lines as were found for the null model. Indeed, instead of lines, datapoints for each pattern form point clouds in the Figure. This makes it difficult to compare the data to the model.

In Figure~\ref{fig:human_collab}B, we show the same data after performing a rolling average. In this panel we split the logarithmic horizontal axis into equidistant segments; $10$ for each order of magnitude. For each segment, we calculate an average prevalence of all $3$-patterns $\mpat$. Every datapoint with $p$-value between the $p$-values of segments $i-1$ and $i+1$ count in the average calculated for segment $i$. The data is plotted with dots. The $G^{(m)}(N,p)$ expectation (curves) was created by plugging the empirical values for $N$ and $p$ for each ego hypergraph into the $G^{(m)}(N,p)$ model. We then performed our averaging procedure to the resulting point cloud. 

Although there are similarities between prevalence curves of $3$-patterns in the empirical ego hypergraphs and the model, there are clear discrepancies as well. For example, the pattern with just a single $1$-hyperedge is clearly overrepresented in the data for several orders of magnitude of the hyperedge density $p$. On the other hand, the pattern consisting of a $1$-node and a $2$-node hyperedge is underrepresented in the data. Similar plots of a dataset of coauthorships in the field of history confirms these observations (Figure~\ref{fig:human_collab}C,D).
\begin{figure}[tb]
    \centering
    \includegraphics[width=\linewidth]{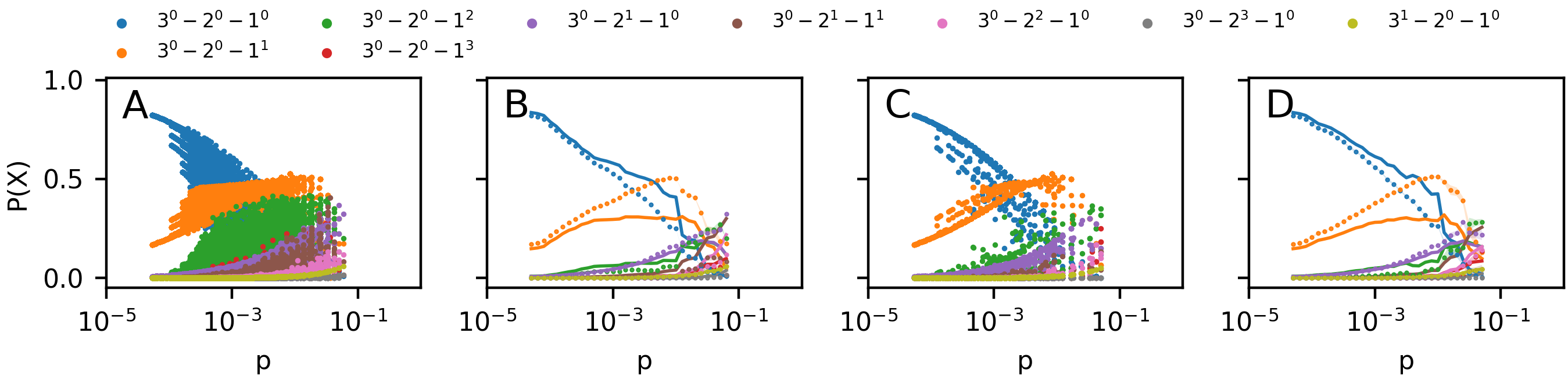}
    \caption{\textbf{A} Frequency of $m$-patterns in ego networks for sizes $10\le N\le 50$ in the Geology coauthorship network~\cite{benson_simplicial_2018}. \textbf{B} Rolling average of data in (A) plotted alongside $G^{(m)}(N,p)$ prediction (curves) \textbf{C} As in (B) but for a History coauthorship network~\cite{benson_simplicial_2018}  \textbf{D} As in (B). In all panels, colors indicate the $m$-pattern shown in the legend. 
    }
    \label{fig:human_collab}
\end{figure}
\subsection{Hypergraphs of human and non-human systems}
The similarity of Figure~\ref{fig:human_collab}B and Figure~\ref{fig:human_collab}D is striking. For the two different coauthorship hypergraphs, many of the same patterns seem to be underrepresented and overrepresented as compared to the $G^{(m)}(N,p)$ null model. The two datasets both stem from academic coauthorship hypergraphs. Could the similarities in $m$-pattern prevalence be due to the fact that the hypergraphs stem from the same domain? And if so, which patterns are overrepresented or underrepresented in hypergraphs from other domains?

\begin{figure}[!tb]
    \centering
    \includegraphics[width=\linewidth]{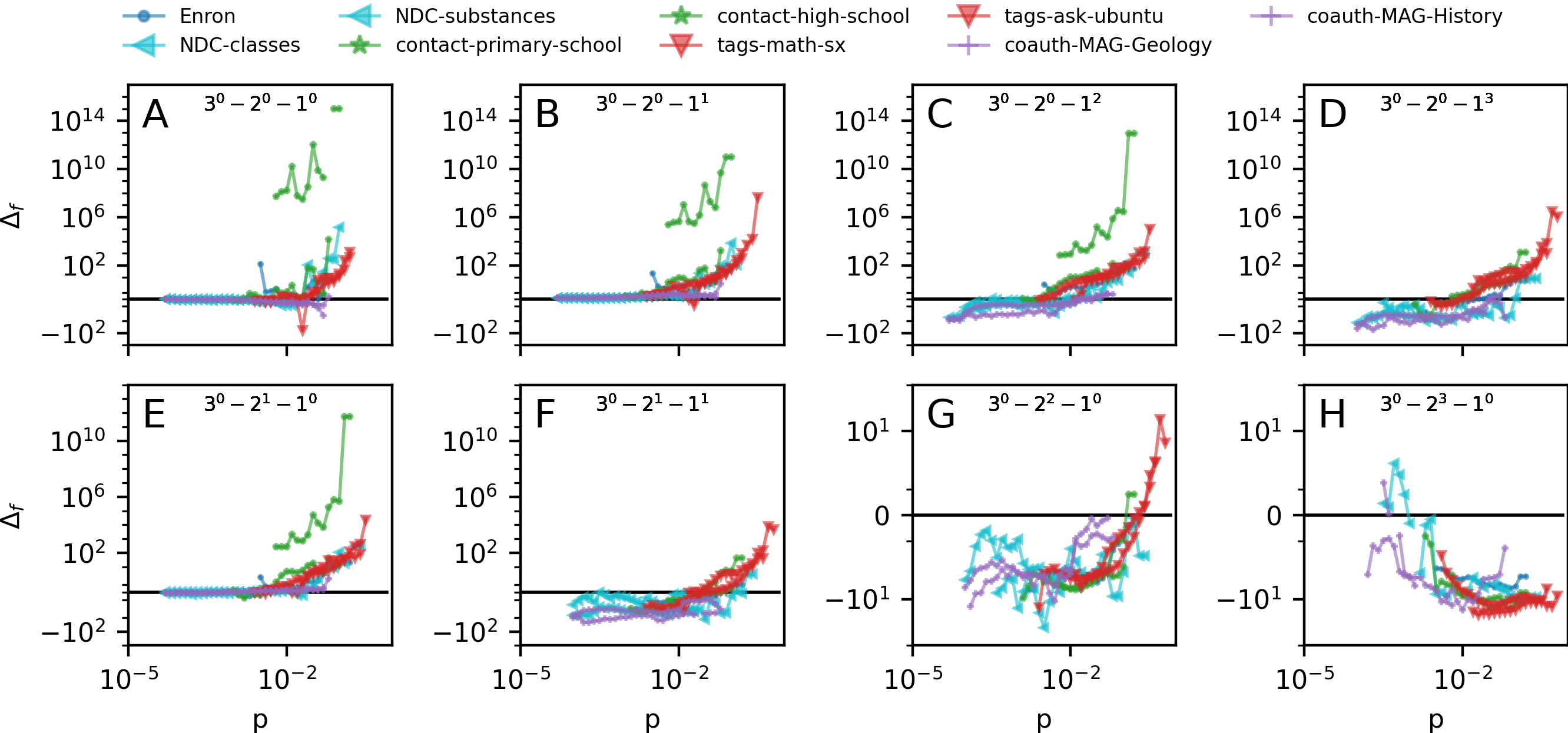}
    \caption{\textbf{A-H} quantify the difference between $m$-pattern prevalence in $9$ datasets as compared to our $G^{(m)}(N,p)$ model. The difference measure is $\Delta_f = [P(\mpat_{\rm data})-P(\mpat_{\rm model})]/\min(P(\mpat_{\rm model}),P(\mpat_{\rm data}))$ and $\Delta_f=0$ corresponds to perfect agreement between data and model. Each panel plots $\Delta_f$ for a specific $3$-pattern as found in all $9$ datasets.}
    \label{fig:diff_data_model}
\end{figure}
In Figure~\ref{fig:diff_data_model} we compare the prevalence of $m$-patterns in ego hypergraphs of 9 different empirical hypergraphs to the $G(N,p)$ model. The hypergraphs represent very different domains: Human and non-human, processes on the web and in nature. Hypergraphs represent email networks (``Enron''), drug networks (``NDC-classes'' and ``NDC-substances''), human contact networks (``contact-primary-school'' and ``contact-high-school''), online tagging data (``tags-math-sx'' and ``tags-ask-ubuntu'') and the academic coauthorship networks introduced above. The vertical axes quantify the difference between the prevalence of $m$-patterns in the empirical ego hypergraphs and the $G(N,p)$ model, $\Delta_f = [P(\mpat_{\rm data})-P(\mpat_{\rm model})]/\min(P(\mpat_{\rm model}),P(\mpat_{\rm data}))$. The color and shape of the marker depends on the domain that the ego hypergraph represents.

The first thing to notice in Figure~\ref{fig:diff_data_model} is how numerically large the values on the vertical axes are (note the symmetrical logarithmic axes). If a datapoint is plotted at vertical value $10$, the pattern is $10$ times more prevalent in the data than in the model. So with the vertical scale covering the interval $[-10^2,10^{14}]$, some patterns are vastly over and underrepresented in the data. 

A second thing to notice in Figure~\ref{fig:diff_data_model} is that some patterns are consequently underrepresented in data. Most clearly underrepresented is the pure pattern of $2$-node hyperedges (Figure~\ref{fig:diff_data_model}H). For all datasets but ``NDC-classes'' this lies clearly in the negative vertical values. The pattern consisting of a $2$-node and $1$-node hyperedge and the pattern with just $2$ $2$-node hyperedges (Figure~\ref{fig:diff_data_model}F and G) are also mostly underrepresented in the datasets. 

A third and interesting aspect of Figure~\ref{fig:diff_data_model} is hints of similarities between datasets from similar domains. With the exception of the school contact networks, datapoints from similar domains fall very close together on the plots.
\begin{figure}[!tb]
    \centering
    \includegraphics[width=\linewidth]{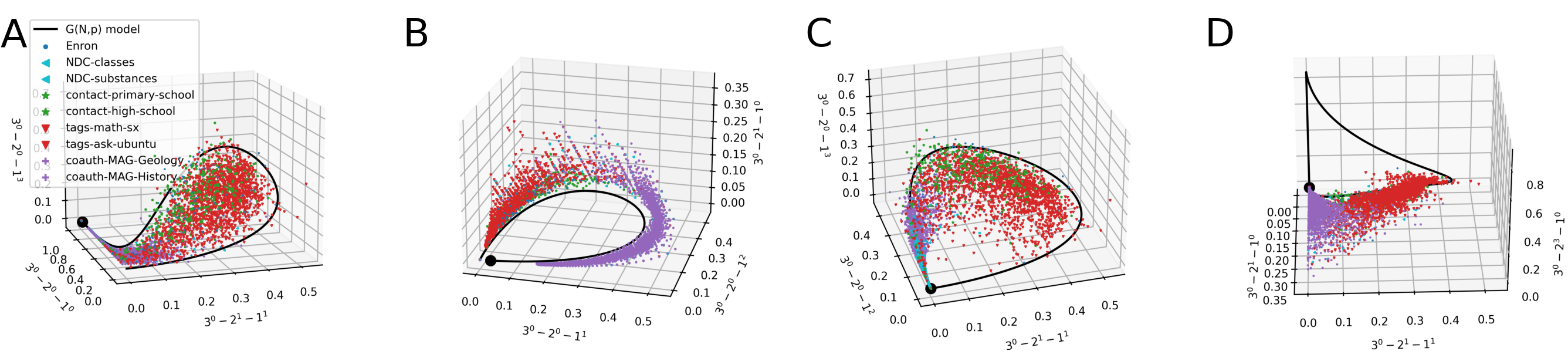}
    \caption{\textbf{A-D} Scatter plots of ego networks in $9$ empirical datasets. Markers as in Figure~\ref{fig:diff_data_model}. Each axis is a $3$-pattern; axes are different in panels. The $G^{(m)}(N,p)$ model traces out the black curves; the black dot corresponds to the lowest hyperedge density $p$ on the curve. }
    \label{fig:3dscatter}
\end{figure}
Figure~\ref{fig:diff_data_model} represents one way of comparing prevalence of $m$-patterns for different datasets. In Figure~\ref{fig:3dscatter} we provide another. Each panel in the figure shows a scatter plot of the prevalence of $3$ $3$-patterns in each of the empirical ego hypergraphs. The color and shape of the marker depends on the domain that the ego hypergraph represents. We also plot the results for our $G^{(m)}(N,p)$ model (with $N=50$). In all panels, the model traces out a parametric curve starting in the point marked by a black dot. Interestingly, the data are not scattered all around the curve; instead, for these scatter plots, datapoints often fall in a limited subspace around the curve. Lastly, the panels show that datapoints from similar domains fall close together. We note that some of this separation could be due to the different orders of magnitude of the hyperedge densities, $p$, present in each dataset (see Figure~\ref{fig:diff_data_model}).

\section{COVID-19 collaborations}
In the previous $2$ sections, we have counted the prevalence of $m$-patterns in empirical ego hypergraphs and our $G^{(m)}(N,p)$ model. The hypergraphs we were examining were always fully grown. One of our main motivations for introducing $m$-patterns was to investigate what prior relationships between a set of $m$ nodes are likely to exist when these nodes choose to collaborate. To confront this question, we now examine hyperedge formation in a growing hypergraph: the coauthorship network of papers submitted to the \texttt{arxiv.org}, \texttt{biorxiv.org} and \texttt{medrxiv.org} preprint servers.
\begin{figure*}[!t]
    \centering
    \includegraphics[width=0.49\linewidth]{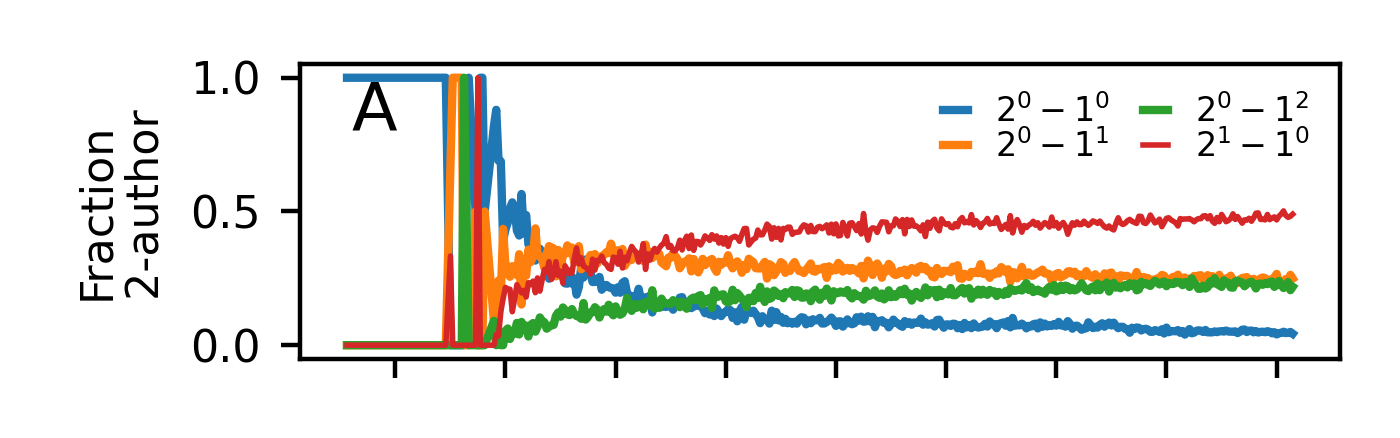}
    \hfill
    \includegraphics[width=0.49\linewidth]{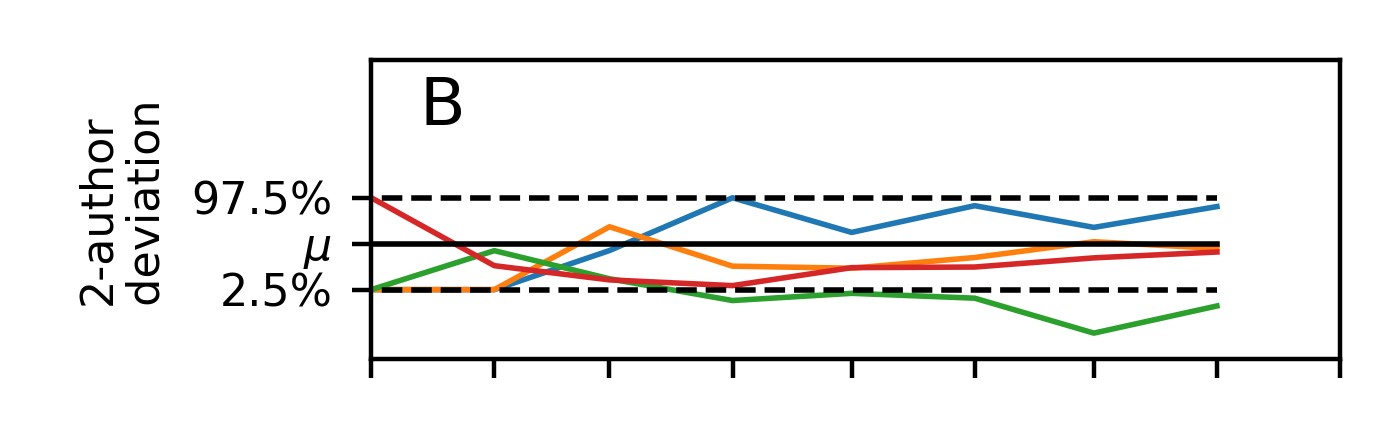}    
\\
\vspace{-10pt}
    \includegraphics[width=0.49\linewidth]{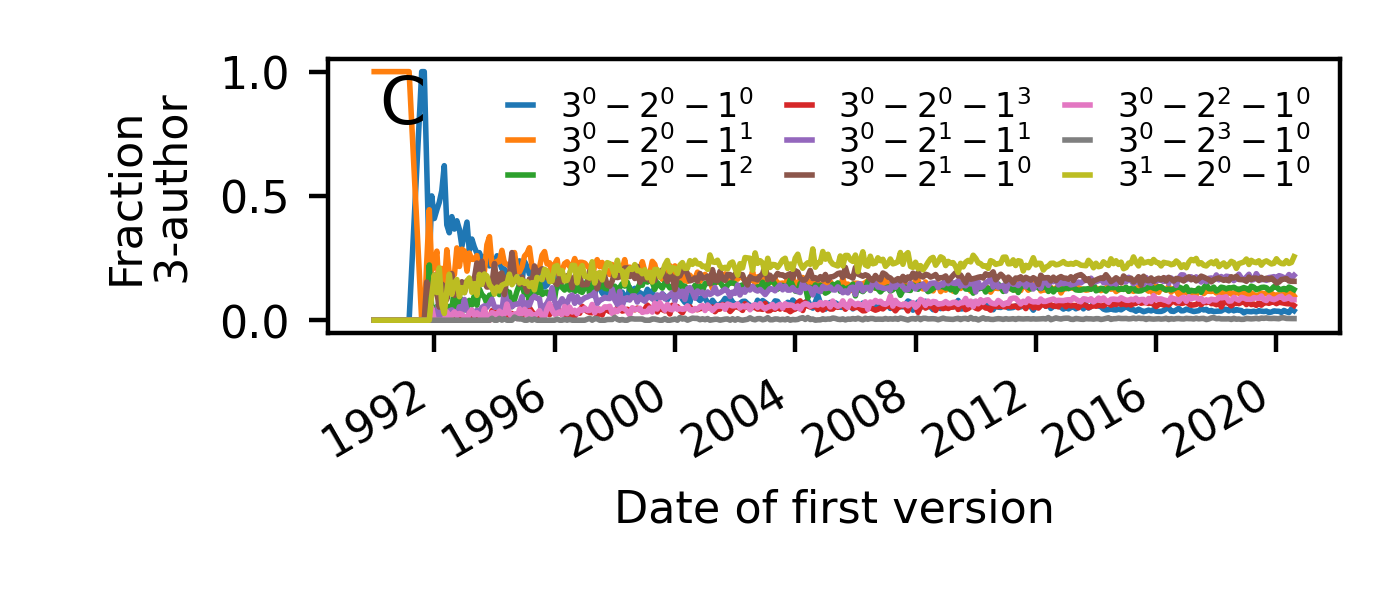}
    \hfill
    \includegraphics[width=0.49\linewidth]{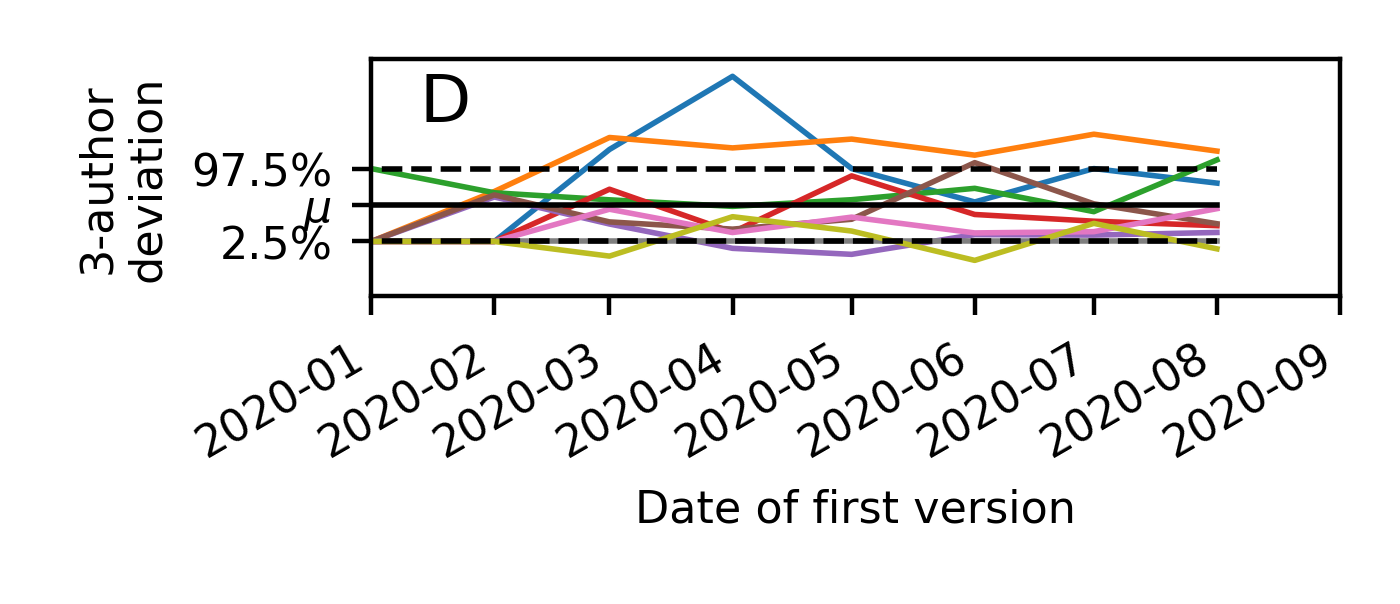}\\
    \vspace{-15pt}
    \caption{
    \textbf{A} Frequency of $2$-patterns for new collaborations in scientific preprints (on \texttt{arXiv.org}, \texttt{biorxiv.org} and \texttt{medrxiv.org}) as a function of time. 
    \textbf{B} Illustration of deviation of $m$-pattern frequency among collaborating scientists on early COVID-19 papers as compared to expectation from the general body of preprints. $\mu$ indicates expectation and dashed lines the $2.5$ and $97.5$ percentiles. 
    \textbf{C-D} As in (\textbf{A}) and (\textbf{B}) but for $3$-patterns.
    }
    \label{fig:COVID}
\end{figure*}

Figure~\ref{fig:COVID}A,C show what fraction of authors on new $2$-author and $3$-author papers had prior relationships that could be summarized by different $m$-patterns. 
The curves are shown as a function of time; time running from the first datapoints for \texttt{arxiv.org} and until September 1, 2020. Datapoints are averages of all papers uploaded in a given month. As the coauthorship hypergraph grows, the likelihood of different prior relationship structures leading to a new $m$-author paper changes. We speculate that each of these curves converges to some value with time. For both $3$ and $2$-author preprints, the repeat collaboration is the most frequent collaboration structure in $2020$.

During the spring of $2020$, 
a surge of COVID-19 related papers accompanied the rising pandemic. 
Teams working on early COVID-19 papers must have formed quickly, and worked intensively to analyze the disease and its consequences. Keeping the common collaboration structures found in Figure~\ref{fig:COVID}A,C in mind, one might wonder whether collaboration structures looked different for these papers that were induced by the external shock of the pandemic. For example, related previous work has established that for a particular subset of these papers \--- multidisciplinary COVID-19 papers \--- collaborations were smaller and more diverse than other collaborations~\cite{cunningham2021collaboration}.

Figure~\ref{fig:COVID}B,D compare the collaboration structure of COVID-19 papers in our dataset to the collaboration structure frequencies found in the entire dataset (COVID-19 papers defined as papers with at least one of the following words in the abstract: \texttt{covid}, \texttt{covid19}, \texttt{covid-19}, \texttt{sars-cov-2}, \texttt{sars-cov2}). If $n_i$ COVID-19 papers were uploaded in month $i$, we compare the frequency of the pattern $\mpat$ to how often we would obtain that pattern when drawing $n_i$ preprints uniformly randomly from all preprints in month $i$. In the data there are significant differences in collaboration structure of COVID-19 papers released between January and August 2020 as compared to papers on all topics in the same period. 
For $2$-author papers, collaborations between two scientists with prior publications but no past joint papers happen less than expected. For $3$-author papers, we find more collaborations consisting of two newcomers and a scientist with prior publications than expected.

\section{Relation between team structure and citation count}

A question that has attracted considerable attention in the literature, is whether team structure influences team performance \cite{nielsen2011reinventing,guimera_team_2005,wu_large_2019,troster_structuring_2014, zeng2021fresh}. Previous studies have examined correlations between performance of teams and team size or dyadic team network structure. Here, we investigate the relation between higher-order team structure---in the form of $m$-patterns---in scientific collaborations and team performance (crudely estimated as the number of citations of published work). 

We study scientific collaborations and their success using the Open Academic Graph (MAG) data set (version 1)~\cite{tang2008arnetminer,sinha2015overview}. The dataset contains more than $166$ million papers including information such as author names, affiliations, publication year, number of citations at the time of data collection, field of study (in the form of keywords) and more. 

To assess whether team structure might affect team performance, it is necessary to consider a number of other variables that could influence how many citations a publication receives. For example, citations could depend on the field of study, the age of the paper, whether the authors on the publication publish in the field often or rarely, and whether they generally receive many citations on their publications. 

To control for the factors other than team structure that could influence citation count, we analyze the data as follows. First, we only compare citation counts for papers within the same field of study. We examine papers from $4$ fields of study: Computer Science, Geology, Mathematics and Sociology. We gather papers from each field of study in separate data sets including only papers where the field of interest is a keyword in the paper's MAG ``field of study'' data. For each of the $4$ fields, we construct an academic collaboration network from the gathered papers and determine the $m$-pattern collaboration structure of each paper. Second, to resolve whether citations are correlated to team structure or other variables, we use a linear regression model to predict the number of citations of a paper based on other variables that could influence citation count: Paper age, mean number of citations of paper authors, mean number of publications of paper authors, and the mean time since paper authors published their first paper. We train the model on $80\%$ of a  dataset that is balanced such that it contains equally many papers with the team structures under consideration (we focus on $2$ kinds of team structures: repeat collaborations and first-time collaborations with no first-time authors), and such that these two sets of papers have identical age distributions (for two sets of papers $A$ and $B$, each with $A(y)$ and $B(y)$ papers of age $y$, we create two subsampled datasets with identical age distributions, $\tilde{A}$ and $\tilde{B}$; these include $min(A(y),B(y))$ published in year $y$ from $A$ and $B$, respectively, drawing papers uniformly at random without replacement from the original sets). For the remaining $20\%$ of papers, we compute the deviation between citations as predicted by the model and actual citations. We quantify this deviation as a mean fractional error of the citation prediction $x_{\rm predicted}$ to the actual citation number $x_{\rm actual}$, 
\[
\mu_i=
\frac{x_{{\rm actual},i}-x_{{\rm predicted},i}}{x_{{\rm actual},i}},\] 
where we set $i=1$ for first-time collaborations and $i=2$ for repeat collaborations.
Finally, we evaluate to what degree the model underestimated citation count of repeat collaborations compared to first-time collaborations or vice versa by performing two-sample tests for these summary statistics. 

Tabel~\ref{tab:citation_count} shows our results for 2-author and 3-author papers. For Computer Science, 2-author repeat collaborations get more citations than would be expected from the trained model alone; moreover, the 2-author repeat collaborations outperform model expectations to a statistical significant higher degree than is done by 2-author first-time collaborations. For 3-author Computer Science collaborations the result is the opposite: 3-author first-time collaborations outperform the model prediction to a statistically significant level compared to the repeat collaboration. 
For Geology, the findings for 2-author papers mimic those found for Computer Science: the 2-author repeat collaborations outperform model expectations to a statistical significant higher degree than is done by 2-author first-time collaborations. For 3-author Geology collaborations there is no significant difference between deviation from the model prediction for repeat and first-time collaborations. 
For Mathematics, there is a significant difference between deviation from the model prediction for repeat and first-time collaborations for 3-author papers. In this case, the finding is similar to that found for 3-author Computer Science papers: first-time collaborations outperform the model prediction to a statistically significant level compared to the repeat collaborations. For Sociology collaborations with 2 or 3 authors, there is no significant difference between deviation from the model prediction for repeat and first-time collaborations. 

\begin{table*}[]
    \centering
    \begin{tabular}{|ccccc|}
    \hline
    \hline
         \textbf{Field} & \textbf{Team size}  & $\boldsymbol{\mu_1\pm\sigma_{\mu_1}}$ & $\boldsymbol{\mu_2\pm\sigma_{\mu_2}}$ & $\boldsymbol{z}$\textbf{-score}  \\
         \hline 
    Computer Science & \makecell{2 authors \\ 3 authors}  &\makecell{$0.3521 \pm 0.0043$ \\ $0.3818\pm0.0055$}  & \makecell{$0.3690\pm 0.0046$\\$0.3554\pm 0.0058$} & \makecell{$2.670$\\$3.304$} \\         
    \hline
    Geology & \makecell{2 authors \\ 3 authors}   &\makecell{$0.2342 \pm 0.0055$ \\ $0.1377\pm0.0050$}  & \makecell{$0.2571\pm0.0064$\\$0.1369\pm0.0049$} & \makecell{2.717\\0.102} \\
    \hline
    Mathematics & \makecell{2 authors \\ 3 authors}  &\makecell{$0.2155 \pm 0.0039$ \\ $0.2846\pm 0.0052 $}  & \makecell{$0.2199\pm 0.0042$\\$0.2697 \pm 0.0050$} & \makecell{0.762\\2.055} \\
    \hline
    Sociology & \makecell{2 authors \\ 3 authors}  &\makecell{$0.3184 \pm 0.0179$ \\ $0.2164\pm 0.0178$}  & \makecell{$0.2908\pm 0.0145$\\$0.2016\pm 0.0179$} & \makecell{$1.200$\\$0.585$} \\ \hline   
    \hline    
    \end{tabular}
    \caption{Relationship between past collaborations and future citations of academic papers in $4$ different scientific fields. For each field, we make two comparisons. In the first, we compare citations of  2-author papers where both authors have published in the past, but never together (the $2$-pattern $2^0-1^2$), to citations of 2-author papers where the authors have collaborated with each other in the past (the $2$-pattern $2^1-1^0$). In the second comparison, we compare citations of $3$-author papers where all authors have published in the past but never all three together (the union of the $3$-patterns $3^0-2^0-1^3$, $3^0-2^1-1^1$, $3^0-2^2-1^0$ and $3^0-2^3-1^0$) to citations of 3-author papers where the triple of authors is a repeat collaboration (the $3$-pattern $3^1-2^0-1^0$). 
    We train a linear regression to predict citation count from the 4 properties of a paper: paper age, mean number of past publications of the authors, mean number of citations of the authors, and mean time since the authors published their first papers. 
    $\mu_1$ indicates the mean error (see main text) on citation predictions for first-time collaborations and $\mu_2$ the mean error on predictions for repeat collaborations. Lastly, we estimate whether these mean fractional errors are significantly different from each other by computing the $z$-score of the pairs of estimates.
    }
    \label{tab:citation_count}
\end{table*}

\section{Discussion}
In many different contexts, individual nodes occasionally co-occur together. Understanding which nodes are likely to form such collaborations, and how prior relationships influence collaboration outcome is important to study. In this work, we have introduced the concept of $m$-patterns, a new family of structural patterns that quantify prior relationships between $m$ nodes in a hypergraph. 

We have argued that prevalence of different such $m$-patterns should depend on hypergraph characteristics such as density of hyperedges, and we have quantified these expectations by studying a $G^{(m)}(N,p)$ model. In particular, we have derived analytical expressions for $m$-pattern prevalence and provided proofs that 
some patterns are and others can never be extreme in the $G^{(m)}(N,p)$ model 
in the limit $N\to \infty$. 

Comparing the model to data from different domains, we found both similarities and differences. Most strikingly, we found that some datasets had certain patterns overrepresented by several orders of magnitude as compared to the model expectation. Interestingly, datasets from the same domain often had similar discrepancies as compared to the model.

In the dataset of preprints, we found the repeat collaboration to be the most prevalent for both $2$-author and $3$-author papers. This is interesting because such a finding would only take place in very dense networks if collaborations were happen uniformly randomly. 
We proceeded to examine whether collaboration structure was different for early COVID-19 preprints as compared to the full dataset of preprints. We found that $2$-author papers were less often coauthored by two scientists with prior publications but no collaborations. For $3$-author preprints, we found more collaborations structures consisting of two newcomers and a person with previous publications.

Finally, we examined whether team structure of academic papers correlated with future citation counts. Considering 2-author and 3-author publications separately, we compared citations of first-time collaboration without first-time authors to citations of repeat collaborations. We did so for 4 fields of study: Computer Science, Geology, Mathematics and Sociology. To account for other factors that could influence future citation count, we trained a linear regression model to predict future citation counts based on paper and author count. In some cases, differences in citation predictions and actual citations for first-time and repeat collaborations were statistically significant. For Computer Science and Geology, 2-author repeat collaborations outperformed model expectations to a statistically significantly higher degree than was done by 2-author first-time collaborations. For Computer Science and Mathematics, 3-author first-time collaborations outperformed model expectations to a statistically significantly higher degree than was done by 2-author repeat collaborations. The linear model is crude and for all fields it tended to underestimate citation count by between $13\%$ and $39\% $ of the actual future citation counts. This being said, the consistency of the results speak to their trustworthiness: We found that repeat collaborations had better performance for 2-author collaborations whereas first-time collaborations had better performance for 3-author collaborations.

There are several natural future research directions related to our work. Throughout this paper, we have argued that investigating whether team structure correlates with team performance is an interesting question. Although we did examine this for 2-author and 3-author papers from $4$ fields, there are many promising questions in this direction. We found different results for 2-author and 3-author papers; what happens for larger collaborations? And if repeat collaborations tend to have higher or lower performance, is the effect larger, smaller or unchanged for teams that collaborate over and over again? Our investigation of whether datasets from the same domains tend to have the same $m$-patterns over and underrepresented remains qualitative. An obvious next step would be to attempt to train an algorithm to guess the domain that a hypergraph stems from given only information about $m$-pattern prevalence. We note that such investigations should carefully control for the fact that data from different domains typically cover different orders of magnitudes of the hyperedge density $p$. Finally, we note that collaboration hypergraphs such as the preprint coauthorship network are growing systems. Although models for collaboration networks exist~\cite{guimera_team_2005}, these are based on dyadic interactions. Formulating a growth model that gives rise to correct $m$-pattern frequencies is an open question.
\clearpage
\bibliographystyle{siamplain}
\bibliography{m-patterns_bib.bib}
\newpage
\appendix

\section{Illustration of hypergraph definitions}
\label{app:sec:hypergraph_definitions_illustration}

In Figure~\ref{app:fig:hypergraph_concepts}, we illustrate some of the hypergraph concepts introduced in Section~\ref{sec:definition}.
\begin{figure}[b]
    \centering
    \includegraphics[width=0.35\linewidth]{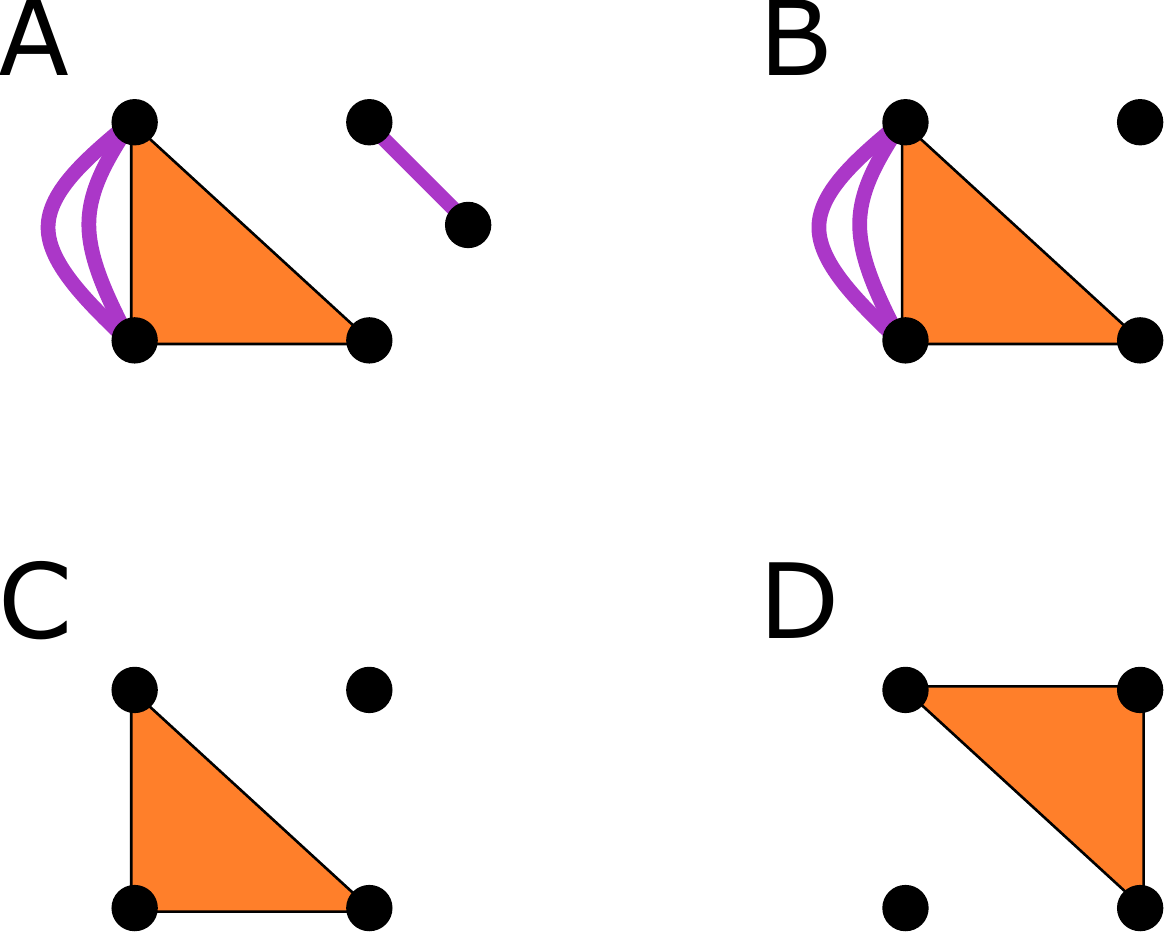}
    \caption{
    Illustration of some concepts introduced in Section~\ref{sec:definition}. \textbf{A} $5$-node hypergraph. \textbf{B} Induced subhypergraph of (A) on the $4$ left-most nodes. \textbf{(C)} Maximal induced subhypergraph of (A) on $4$ left-most nodes. \textbf{D} A $4$-pattern. (C) happens to be an instance of this $4$-pattern in the hypergraph in (A). Labelling nodes $\{0,1,2,3\}$ starting with the label $0$ in  the top-left corner and increasing labels by $1$ in the clockwise direction, (C) and (D) are also examples of two different labelled $4$-patterns.
    }
 \label{app:fig:hypergraph_concepts}
\end{figure}

\section{Labelled version of Figure~\ref{fig:random_m-patterns}B}
\label{app:sec:labelled2B}
In Figure~\ref{fig:random_m-patterns}B we did label individual curves. In Figure~\ref{app:fig:labelled2B} we split the curves from Figure~\ref{fig:random_m-patterns}B into $3$ different panels and provide curve labels. We use a different naming convention here than we did for $3$-patterns in the rest of the manuscript. We do so because the other (less complicated) notation could uniquely describe $1$- $2$- and $3$-patterns, but cannot uniquely map all $4$-patterns. The naming convention is described in the caption of Figure~\ref{app:fig:labelled2B}.
\begin{figure}[t]
    \centering
    \includegraphics{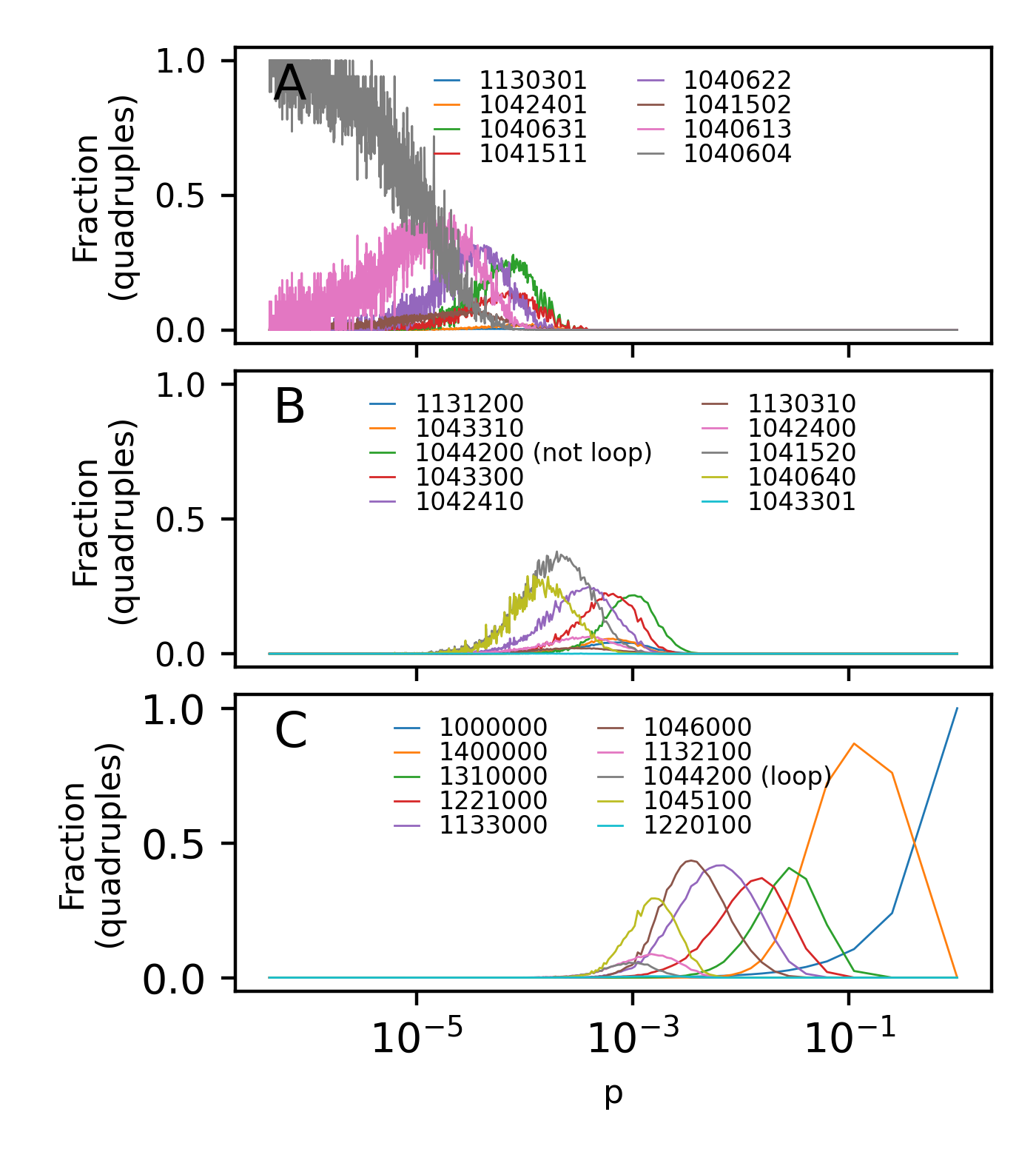}
    \caption{
    Labelled version of Figure~\ref{fig:random_m-patterns}B split into $3$ panels to make plot colors easier to distinguish. The naming convention used for $4$-patterns is different than that used for $3$-patterns in the rest of the manuscript. The pattern $1ABCDEF$ has A $3$-node hyperedges filled and B not filled, C $ 2$-node hyperedges filled and D not filled, E $1$-node hyperedges filled and F not filled. The pattern $1000000$ is the $4$-pattern consisting of a single $4$-node hyperedge. There are $2$ possible $4$-patterns with the name $1044200$ (consisting of 4 $2$-node hyperedges and all other possible hyperedges missing): one where the hyperedges form a loop and another where they do not. The analytical solutions are not plotted in this figure. 
    }
    \label{app:fig:labelled2B}
\end{figure}
\section{Proof of Theorem~\ref{thm:pure_patterns}}
\label{app:proof:thm:pure_patterns}
\begin{proof}
Without loss of generality, let us focus on the pure pattern consisting only of $k$-node hyperedges. Let us denote the prevalence of this pattern $P(\mpat_k)$. By Eq.~\eqref{lem:mpat_analytical}, the analytical formula for $P(\mpat_k)$ is,
\begin{equation}
    P(\mpat_k)= p_k^{{m\choose k}}\prod_{i=k+1}^m (1-p_i)^{{m\choose i}}.
\end{equation}
According to Lemma~\ref{lem:zeroone}, for any $\epsilon>0$ and large enough $N$, we can choose a $p$ such that all factors in this product are arbitrarily large. All other patterns will either have factors of $(1-p_k)$ in the analytical expression, or factors of $p_l$ in the expression, where $l\ge k+1$. By Lemma~\ref{lem:zeroone}, $N$ can be chosen large enough to make any such factors arbitrarily close to $0$ if $0<p_k<1$. This proves the theorem.
\end{proof}

\section{Proof of Lemma~\ref{lem:pure_dominates_above_1/2}}
\label{app:proof:lem:pure_dominates_above_1/2}
\begin{proof}
By Lemma~\ref{lem:zeroone}, if $p_k>\frac{1}{2}$, $N$ can be chosen large enough to make the value of $p_l$ arbitrarily close to $1$ for $l\le k-1$. 
The prevalence of the pure pattern with $k$-node hyperedges is 
\begin{equation}
    P(\mpat_k) = p_k^{{m \choose k}}\Omega,
    \label{eq:kpure}
\end{equation}
where $\Omega=\prod_{i=k+1}^m (1-p_i)^{{m\choose i}}$. 
The prevalence of the non-pure patterns containing $x_k$ $k$-node and $x_{k-1}$ $(k-1)$-node hyperedges is,
\begin{equation}
    P(\mpat_{k-1,k}') = p_{k-1}^{x_{k-1}} p_k^{x_k} (1-p_k)^{{m\choose k}-x_k} \Omega.
    \label{eq:not_kpure}
\end{equation}
By Lemma~\ref{lem:zeroone}, for any $\epsilon>0$ and large enough $N$, the first factor in this last expression can get arbitrarily close to $1$. 
In this limit, the pure and non-pure patterns therefore cross when the remaining factors in Eqs.~\eqref{eq:kpure} and \eqref{eq:not_kpure} are equal. This happens at when $p_k=(1-p_k)$; in other words, $p_k = \frac{1}{2}$. 
For any $p_k>\frac{1}{2}$, $p_k > (1-p_k)$. Comparing Eqs.~\eqref{eq:kpure} and \eqref{eq:not_kpure}, the pure pattern dominates in this case.
\end{proof}

\section{Proof of 
Lemma \ref{lem:pure_dominates_below_1/2}}
\label{app:proof:lem:pure_dominates_below_1/2}
\begin{proof}
By Lemma~\ref{lem:zeroone}, if $p_{k+1}$ is non-zero, we can choose $N$ large enough $p_l\to1$ as $N\to \infty$ for $l\le k-1$. 
The prevalence of the pure pattern with $k$-node hyperedges is 
\begin{equation}
    P(\mpat_k) = p_k^{{m \choose k}}(1-p_{k+1})^{{m\choose k+1}}\Omega',
    \label{eq:kpure_elab}    
\end{equation}
where $\Omega'=\prod_{i=k+2}^m (1-p_i)^{{m\choose i}}$. 
The prevalence of the non-pure patterns containing $x_k$ $k$-node and $x_{k+1}$ $(k+1)$-node hyperedges is,
\begin{equation}
    P(\mpat_{k-1,k}') = p_{k}^{x_{k}} p_{k+1}^{x_{k+1}} (1-p_{k+1})^{{m\choose k+1}-x_{k+1}} \Omega'.
    \label{eq:k+1_nonpure}
\end{equation}
By Lemma~\ref{lem:zeroone}, for any $\epsilon>0$ and large enough $N$, the first factor in both Eqs~\eqref{eq:kpure_elab} and \eqref{eq:k+1_nonpure} can get arbitrarily close to $1$. 
Hence, the pure and non-pure patterns above cross when the remaining factors are equal in Eqs~\eqref{eq:kpure_elab} and \eqref{eq:k+1_nonpure}. This happens when $p_{k+1}=\frac{1}{2}$. 
For lower values of $p_{k+1}$, $(1-p_{k+1})>p_{k+1}$. 
This proves the lemma.
\end{proof}

\section{Illustration from proof of Theorem~\ref{thm:non-pure}}
\label{app:illustration:thm:non-pure}
In the proof of Theorem~\ref{thm:non-pure}, we apply  Lemma~\ref{lem:cross_1/2}. This Lemma states that many prevalence curves cross at values of $p$ where $p_k=1/2$. In Figure~\ref{fig:labelled}, we confirm this by simulations of the $G^{(m)}(N,p)$ model.
\begin{figure}
    \centering
    \includegraphics[]{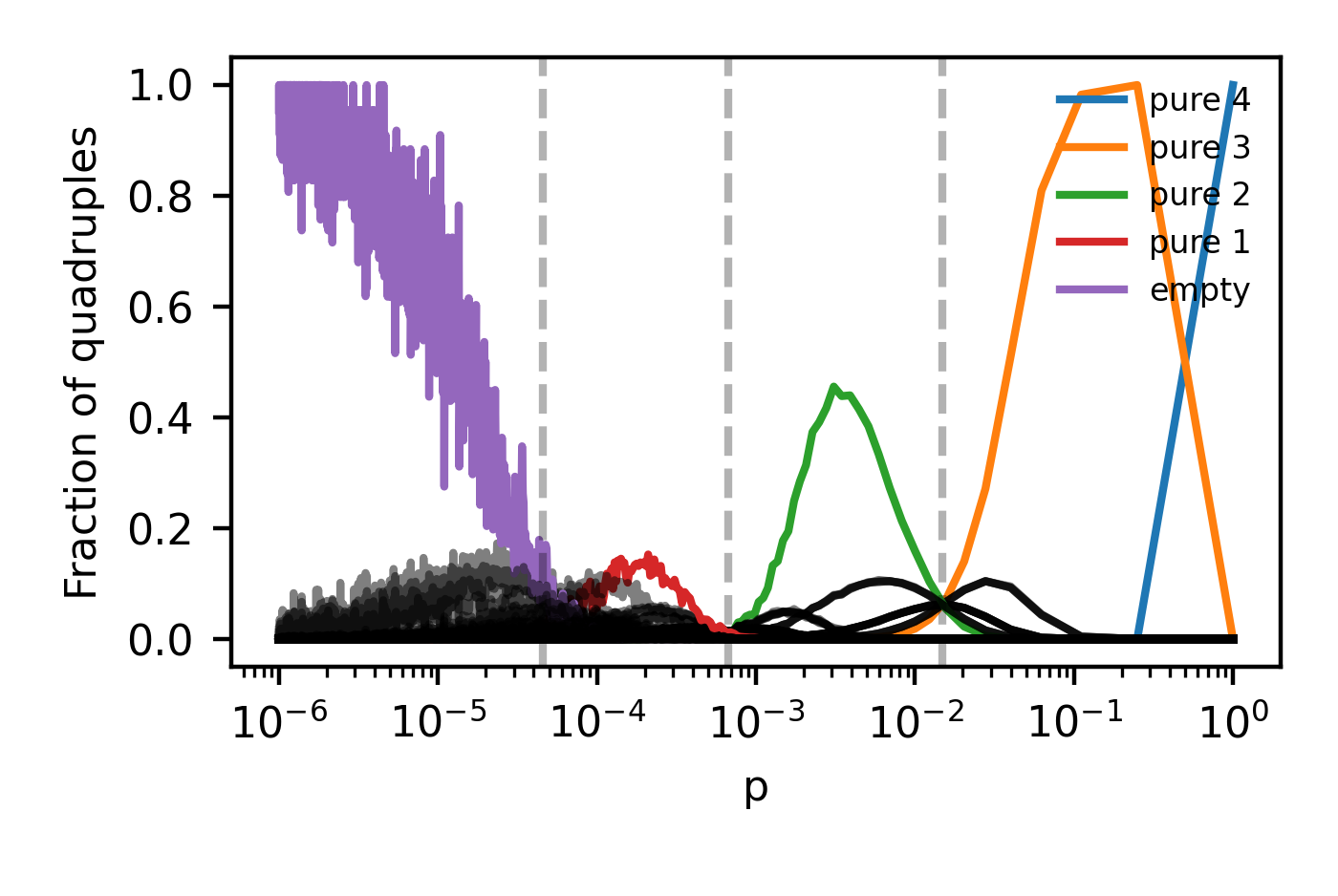}
    \caption{Frequency of labelled $m$-patterns in the $G^{(m)}(N,p)$ model with pure $k$-node hyperedge patterns in colors. Each datapoint plots the average prevalence of a labelled $m$-pattern in $10$ simulations of the model for the given $p$ value and $m=4$, $N=100$. Vertical gray dashed lines indicate values pf $p$ where $p_k=1/2$ for $1\le k\le4$. As argued in the proof of Theorem~\eqref{thm:non-pure}, many prevalence curves cross at these values of $p$.}
    \label{fig:labelled}
\end{figure}
\section{Proof of Theorem~\ref{thm:two_patterns_not_extreme}}
\label{app:proof:thm:two_patterns_not_extreme}
\begin{proof}
Let us refer to the patterns as $\mpat_A$ and $\mpat_B$. 
The prevalence of the patterns can be written down explicitly,
\begin{align}
    P(\mpat_A) &= \gamma_A p_{k-1}^{x_{k-1}^{(A)}}p_{k}^{x_{k}}(1-p_k)^{{m\choose k}-x_k}\Omega,\\
    P(\mpat_B) &= \gamma_B p_{k-1}^{x_{k-1}^{(B)}}p_{k}^{x_{k}}(1-p_k)^{{m\choose k}-x_k}\Omega,
\end{align}
where $\Omega=\prod_{i=k+1}^m (1-p_i)^{{m\choose i}}$ and $x_j^{(L)}$ is the number of $j$-node hyperedges in $\mpat_L$. 
By Lemma~\ref{lem:zeroone}, $N$ can be chosen large enough to make the $p_{k-1}$ factor arbitrarily close to $1$ for both $P(\mpat_A)$ and $P(\mpat_B)$. 
Hence, for increasing $N$,
\begin{equation}
    \frac{P(\mpat_A)}{P(\mpat_B)} \to \frac{\gamma_A}{\gamma_B}.
\end{equation}
We conclude that the pattern with the smallest combinatorial factor is bound to be less prevalent than the other pattern.
\end{proof}
\section{Code and Data}
Code and data to reproduce the plots in this paper are available in the Github directory: \texttt{https://github.com/jonassjuul/team-formation}.
\end{document}